\documentclass[journal,letterpaper,twocolumn,twoside,nofonttune]{IEEEtran}

\usepackage{amsmath}
\usepackage{amsfonts}
\usepackage{bbding}   
\usepackage{stmaryrd}
\usepackage{latexsym}
\usepackage{amssymb}
\usepackage{graphics}
\usepackage{graphicx}
\usepackage{colordvi}
\usepackage{comment}
\usepackage{color}
\usepackage{tikz,pgfplots}

\usepackage{mathrsfs} 

\usepackage{cite} 
\usepackage{upref}

\usepackage{theorem}

\usepackage{setspace}
\usepackage{algorithm, algorithmic}

\usepackage{multirow}

\usepackage{comment}  




\renewcommand{\markboth}[2]
{\renewcommand{\leftmark}{#1}\renewcommand{\rightmark}{#2}}


\newcommand{\EXP}[1]{\mathsf{E}\!\left[#1\right] }

\usepackage{todonotes}

%




\theoremstyle{plain} 
\theorembodyfont{\normalfont\slshape}

\newtheorem{thm}{Theorem\hspace{-1pt}} 
\newenvironment{theorem}
{\begin{thm}\hspace*{-1ex}{\bf.}}{\end{thm}}

\newtheorem{lem}[thm]{Lemma\hspace{-.75pt}}
\newenvironment{lemma}{\begin{lem}\hspace*{-1ex}{\bf.}}{\end{lem}}

\newtheorem{prop}[thm]{Proposition$\!$}

\newtheorem{cor}[thm]{Corollary$\!$}

\newtheorem{defn}{Definition$\!$}
\newenvironment{definition}{\begin{defn}\hspace*{-1ex}{\bf.}}{\end{defn}}

\newtheorem{assumption}{Assumption}

\setlength\theorempreskipamount{5pt plus 5pt minus 3pt}
\setlength\theorempostskipamount{5pt plus 3pt minus 1.5pt}



\newcounter{enumrom}
\renewcommand{\theenumrom}{(\roman{enumrom})}


\makeatletter
\renewcommand{\@endtheorem}{\endtrivlist}
\makeatother




\makeatletter
\renewcommand{\thefigure}{{\bf \@arabic\c@figure}}
\renewcommand{\fnum@figure}{{\bf Figure}\,\thefigure}
\makeatother


 %


\makeatletter
\DeclareRobustCommand{\myfrac}[1]{%
  \global\@xp\let\csname#1\@xp\endcsname\csname @@#1\endcsname
  \csname#1\endcsname
}
\makeatother


\newcommand{\cF}{{\cal F}}

\newcommand{\cI}{{\cal I}}

\newcommand{\cL}{{\cal L}}
\newcommand{\cM}{{\cal M}} 
\newcommand{\cN}{{\cal N}}

\newcommand{\cT}{{\cal T}}


\DeclareMathAlphabet{\mathbfsl}{OT1}{ppl}{b}{it} 

\newcommand{\beps}{\mathbf{\epsilon}} 




\newcommand{\be}[1]{\begin{equation}\label{#1}}
\newcommand{\ee}{\end{equation}} 
\newcommand{\eq}[1]{(\ref{#1})}


\renewcommand{\leq}{\leqslant}
 
\renewcommand{\geq}{\geqslant}


\renewcommand{\Bbb}{\mathbb}

\newcommand{\R}{{\Bbb R}}



\newcommand{\Tref}[1]{Theo\-rem\,\ref{#1}}

\newcommand{\Lref}[1]{Lem\-ma\,\ref{#1}}
\newcommand{\Cref}[1]{Co\-ro\-lla\-ry\,\ref{#1}}




\newcommand{\bet}{\beta}

\newcommand{\ep}{\varepsilon}
\newcommand{\eps}{\varepsilon}

\newcommand{\gd}{\cT g}

\title{Global Games with Noisy Information Sharing}
\author{
Hessam Mahdavifar, Ahmad Beirami, Behrouz Touri, and Jeff S.\ Shamma
\thanks{The work of B.\ Touri is supported by National Science Foundation under the grant NSF-ECCS 1610003. The work of J.\ S.\ Shamma is supported by funding from King Abdullah University of Science and Technology (KAUST).}
\thanks{H.\ Mahdavifar is with the Department of Electrical and Computer Engineering, University of Michigan, Ann Arbor, MI 48109, USA (email: hessam@umich.edu).}
\thanks{A.\ Beirami is with the School of Engineering and Applied Sciences, Harvard University, Cambridge, MA, USA, and also with the Electrical Engineering and Computer Science Department, Massachusetts Institute of Technology, Cambridge, MA, USA (email: beirami@seas.harvard.edu, beirami@mit.edu).}
\thanks{B.\ Touri is with University of California San Diego, Electrical and Computer Engineering Department, La Jolla  92093, USA (email: btouri@eng.ucsd.edu).}
\thanks{J.\ S.\ Shamma is with King Abdullah University of Science and Technology (KAUST), Computer, Electrical and Mathematical Science and Engineering Division (CEMSE), Thuwal 23955--6900, Saudi Arabia (email: jeff.shamma@kaust.edu.sa).}
\vspace{-.25in}
}

\markboth%
{{\sc Mahdavifar, Beirami, Touri, and Shamma:}
GLOBAL GAMES WITH NOISY INFORMATION SHARING}
{Accepted to the 
{\sc IEEE Transactions on Signal and Information Processing over Networks}}

\begin{document}
\maketitle
\begin{abstract}
Global games form a subclass of games with incomplete information where a set of agents decide actions against a regime with an underlying fundamental $\theta$ representing its power. Each agent has access to an independent noisy observation of $\theta$. In order to capture the behavior of agents in a social network of information exchange we assume that agents share their observation in a noisy environment prior to making their decision. We show that global games with noisy sharing of information do not admit an intuitive type of threshold policy which only depends on agents' belief about the underlying $\theta$. This is in contrast to the existing results on the threshold policy for the conventional set-up of global games. Motivated by this result, we investigate the existence of equilibrium strategies in a more general collection of threshold-type policies and show that such equilibrium strategies exist and are
unique if the sharing of information happens over a sufficiently noisy
environment. 

\end{abstract}

\begin{keywords} 
Global game, threshold policy
\end{keywords}
\section{Introduction}

Games of incomplete information are central to modeling of socio-economic behaviors in social networks. In games with incomplete information, the information that is shared among the agents is not symmetric and often each agent has access to limited information about the game's parameters, which may be correlated with the information that is available to the other agents. This incompleteness of information is often represented by considering the payoff as a function of some random variable whose exact value is not known to the agents, which is called the underlying economic or political fundamental in the society. 

A subclass of games with incomplete information is the class of global games, originally introduced in~\cite{Carlsson:93,Carlsson:98}. In its simplest form, each agent has access to an independent observation of the underlying fundamental based on which she takes either of the two actions: risky or safe action. The payoff of an agent taking the risky action is monotonically increasing with the number of agents who take the risky action and is a decreasing function of the underlying fundamental. 

Global games have been central to modeling and studying many social coordination phenomena. In \cite{Carlsson:93}, the authors discussed a general form of global games for two players and discussed how the vanishing noise in a public information results in a unique threshold policy. Such analysis and modeling technique was extended and used in \cite{morris1998unique} to model currency attacks. The authors model currency attack using a global game formulation and the authors show that even though the complete information variation of the game has multiple (Nash) equilibria, introducing a small noise into speculators observations about the economic fundamental leads to a unique threshold policy equilibrium. In \cite{morris2004coordination}, a similar idea is used to model the debt crisis and to arrive at a unique threshold policy for the players given the noisy observations of the players. Finally, in \cite{experimental}, the accuracy of the prediction of the global games results have been extensively studied and the authors concluded that: ``Comparing sessions with common and
private information, we observe only small differences in behavior.'' In the engineering domain, the authors of\cite{kanakia2016modeling} have discussed the application of global games in distributed task-allocation in collaborative robotic networks and utilized the same idea to show the uniqueness of threshold policies for task allocation in coordination problems in robotics networks. 


The majority of the past studies on global games have been focused on the information structure where each agent has an independent noisy observation of the underlying economic or social fundamental, i.e.,\ given the fundamental, the  observations of each agent is independent of the rest of the agents' observations. In such settings, it has been shown that, under some condition on the distribution of the underlying fundamental and the independency of agents' observations, there exists an equilibrium strategy with threshold policies based on the private observations of each agent. In \cite{Alireza12}, the case of perfect sharing of information was introduced and the structure of equilibria for global games with perfect sharing of information was discussed. We challenge robustness of these results to sharing of information: we show that even in a simple instance of global games, when agents share information, the intuitive equilibrium doesn't exist. However, we show that with a proper generalization of the notion of \textit{threshold policy}, a symmetric threshold policy (in its extended sense), exists under some conditions on the communication noise. This paper extends the results of the authors in \cite{Touri-global,mahdavifar2015threshold} and provides an extension of the existence of a symmetric equilibria in global games with noisy sharing of information involving more than two players. 

In this paper, global games with noisy sharing of information are introduced where any two agents share their information in a noisy environment. Our contributions are summarized below:
\begin{itemize}
\item
We show the fragility of the existing results in global games by showing that with noisy sharing of information, an intuitive threshold policy which only depends on agents' belief about the underlying fundamental does not lead to a Bayesian Nash equilibrium. 

\item 
We show that if a certain functional equation has a solution, then a symmetric Bayesian Nash equilibrium exists for global games with noisy sharing of information which can be described as a generalized form of the intuitive threshold policies.
\item
We utilize Banach fixed point theorem and show that the functional equation has a solution if the noise and the information structure of the underlying model satisfy certain conditions. This establishes the existence of the generalized threshold strategy equilibria,  when the noise parameters and the information structure satisfy certain conditions. 

\end{itemize}

It is worth noting that this work also relates to many of the recent attempts for understanding the role of information and information structure in multiagent decision making problems \cite{nayyar2011optimal,gupta2014existence,Mahajan,Marden}. In this paper, we consider a static framework, similar to most applications of global games. Global games have been also studied with delay \cite{dasgupta2007coordination} and in dynamic framework \cite{angeletos2007dynamic}, which also relates to networking games and evolution of cooperation in social networks \cite{altman2006survey,ohtsuki2006simple,apicella2012social,tan2015towards,tan2016analysis}. The structure of this paper is as follows. In Section~\ref{sec:setup} we introduce global games with noisy sharing of information, and we mathematically formulate the problem of interest. In Section~\ref{sec:main} we state the three main theorems of the paper on the non-existence of certain intuitive threshold policies, the existence of a general form of threshold policies, and the convergence to this threshold policy via an iterative method. In Section~\ref{proof} the proofs of the three main theorems are established. Some numerical examples are provided in Section\,\ref{numerical}. Finally, we conclude the paper in Section~\ref{sec:conclusion}.

\section{Problem Setup}\label{sec:setup}

In this section, we present the framework of the problem that will be studied in this paper. We study the basic form of the global games with noisy sharing of information. In this setting, we consider a set
$$
[n]:=\{1,\ldots,n\},
$$
 of $n$ agents or players. Each agent has a set of binary actions $A_i=\{0,1\}$. We refer to the action $\alpha_i=1$ as the risky action and $\alpha_i=0$ as the safe action. The payoff of an agent taking the safe action is zero, whereas that of taking the risky action is $\sum_{i=1}^n\alpha_i-\theta$, where $\theta$ is a random variable representing the underlying fundamental in the society. In other words, if $\alpha=(\alpha_1,\ldots,\alpha_n)\in \{0,1\}^n$ is an action profile of the $n$ agents, then the utility of the $i$th agent is the function $u_i$ with
\begin{align}\label{eqn:utility}
u_i(\alpha)=\alpha_i \left(\sum_{j=1}^n\alpha_j-\theta\right).
\end{align}
\noindent
{\bf Remark.} For an intuitive description of the utility functions consider global games in the context of political regime change\cite{edmond}. In this set-up, the parameter $\theta$ represents the power of a political regime that can be overthrown but only if enough citizens participate in an uprising, i.e., take the risky action. It is thus natural to assume that the utility function of agents taking the risky action is increasing in the number of such agents and is decreasing in the fundamental parameter $\theta$. Also, the utility function of agents taking the safe action is considered to be zero. Therefore, utility functions of the form given in \eq{eqn:utility} are natural to use in global game models. The results in the global games literature can often be extended to general utility functions that are monotone in the number of agents taking the risky actions as well as $\theta$.

\textit{Observations and Policies:} In the standard setting for global games, agent $i$ is observing $x_i=\theta+\eta_i$ where $\eta_i$'s are identically and independently distributed random variables \cite{Carlsson:93,Carlsson:98}. In our work, agents share their observations $x_i$ with the other agents in the society through noisy channels. In other words, each agent has its private observation $x_i$ as well as noisy observations of other agents' private observations. Mathematically, we represent agent $i$'s overall observation by a random vector $y_i\in \R^{q_i}$ (which relates to $\theta$ and other agents' observations). The parameter $q_i$ represents the number of observations that agent $i$ has. In this paper, we focus on the case that each agent shares its observation with all other agents. Hence, $q_i = n$, for all $i \in [n]$. We refer to $y_i$ as the (private) information of agent $i$ (about $\theta$). We refer to a measurable function $s_i:\R^{q_i}\to A_i$ that maps a private observation of agent $i$ to one of the two actions as a (pure) \textit{strategy} or \textit{policy}. When $q_i=1$ for some $i$, we say that $s_i$ is a threshold policy if $s_i(y)=1$ for $y\leq t$ and $s_i(y)=0$ for $y>t$, for some threshold value $t$. We denote such a strategy by $s_i(y)=1_{y\leq t}$. In almost all the instances of the global games, the random variables have continuous joint distribution, and hence, the value of the strategy $s_i$ at the threshold value $t$ is practically unimportant.

\textit{Equilibrium:} Our focus in this paper is on the existence of a strategy profile $s=(s_1,\ldots,s_n)$ that results in a Bayesian Nash Equilibrium. To introduce this concept, let $s=(s_1,\ldots,s_n)$ be a strategy profile of the $n$ agents and let $s^{-i}=(s_1,\ldots,s_{i-1},s_{i+1},\ldots,s_n)$ be the strategy profile of the $n-1$ agents except the $i$th agent's strategy. We say that a \textit{best response strategy} to the strategy profile $s^{-i}$ is a strategy $\tilde{s}:\R^{q_i}\to A_i$ such that
\begin{equation}
\tilde{s}(u) = \left\{\hspace{-.05in}
\begin{array}{ll}
0 & \text{if~}1+E[\sum_{j\not=i}s_j(y_j)\mid y_i=u]< E[\theta\mid y_i=u],\\
1 & \text{if~}1+E[\sum_{j\not=i}s_j(y_j)\mid y_i=u]>  E[\theta\mid y_i=u].
\end{array}
\right.
\label{eqn:equilibrium}
\end{equation}
%
%
%
We denote the set of all best responses to a strategy profile $s^{-i}$ by ${\textit{{\textit{BR}}}}_i(s^{-i})$. 
Finally, we say that a strategy profile $s=(s_1,\ldots,s_n)$ is a \textit{Bayesian Nash Equilibrium} or simply an \textit{equilibrium} if $s_i\in {\textit{BR}}(s^{-i})$ for all $i\in [n]$. 

An extensively studied model in global games is the case where $q_i=1$ for all $i\in [n]$ and $y_i=x_i=\theta+\xi_i$ where $\{\xi_1,\ldots,\xi_n\}$ are {\em independent} and identically distributed $\cN(0,\sigma^2)$ Gaussian random variables for some $\sigma^2>0$. 
Furthermore, it is customary to assume that $\theta$ is picked from a non-informative uniform distribution over $\R$. See \cite{morris2003} and the references therein for further discussion on this assumption. 

In~\cite{morris2003}, it is shown that there exists a symmetric threshold policy on $x_i$'s which corresponds to a Bayesian Nash equilibrium for this instance of global games. In other words, there exists a threshold value $t\in \R$ such that for $x_i\leq t$, agent $i$ chooses to take the risky action and for $x_i>t$, she takes the safe action and such an action profile leads to an equilibrium.  Here, an important fact is that $x_i=E[\theta\mid x_i]$ which means that in such an equilibrium each agent should compare her \textit{expected strength of regime given her private observation} to a threshold and take a proper action accordingly. 

\textit{Our Model for Information Structure:} In this work, we consider global games with noisy sharing of information. We seek to understand the role of information sharing among the agents in the decision making scenarios that are modeled by global games. 

Throughout this work the utility of each agent is given by \eqref{eqn:utility} and the information structure satisfies the following assumption. 
\begin{assumption}\label{assum:information}
We assume that $\theta$ is uniformly distributed over $\R$. Also, agent $i$'s private information is the $n$ dimensional random vector:
\begin{align}
y_i:=(x_i,\{y_{ji} \}_{j \in [n] \setminus \{i\}}),
\label{eqn:y_i}
\end{align}
where $x_{i}=\theta+\xi_i$ for all $i\in [n]$ and $y_{ji}=x_j+\zeta_{ji}$ for $j\not=i$, where $\xi_1,\ldots,\xi_n$ are i.i.d.\ $\cN(0,\sigma^2)$ random variables and $\zeta_{ji}$ are i.i.d.\ $\cN(0,\tau^2)$ random variables that are independent of $\xi_1,\ldots,\xi_n$, and $\sigma^2,\tau^2>0$ are given parameters.
\end{assumption}

The dependence diagram of the random observation vector $y_i=(x_i, \{y_{ji} \}_{j \in [n] \setminus \{i\}})$ that is available to agent $i$ is shown in Figure~\ref{fig:multi-agent}. 

Throughout the rest of the paper, the scalar $\sum_{j \in [n] \setminus \{i\}}y_{ji}$ which is the sum of all the information arrived at agent $i$ from the rest of agents plays a central role. We denote this quantity by $z_i$, i.e., 
\begin{align}\label{eqn:z}
z_i:=\sum_{j \in [n] \setminus \{i\}}y_{ji}.
\end{align} 

We define a global game with noisy sharing of information as below. 
\begin{defn}\label{def:gg}
A game with utility functions $u_1,\ldots,u_n$ described by \eqref{eqn:utility} and information structure satisfying Assumption~\ref{assum:information} is a \textit{\bf global game with noisy information sharing}. 
\end{defn}

The major challenge in analyzing global games with noisy information sharing is the limited information of each agent about the underlying fundamental $\theta$. Note that if all the agents know about the exact realization of $\theta$, and for $\theta\not=n$ either of the symmetric actions $(0,\ldots,0)$ or $(1,\ldots,1)$ would be appealing. However, in the global games, the perfect knowledge of the underlying fundamental is not available to any agent and each agent has a noisy observation of the fundamental $\theta$. 

\begin{figure}
\begin{center}
\includegraphics[width = 0.55\linewidth]{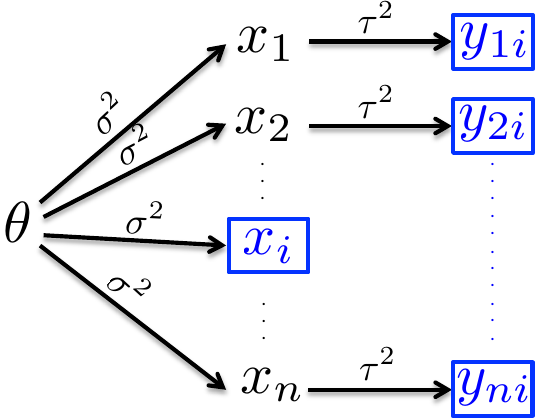}
\end{center}
\caption{The dependence diagram between the observation vector $y_i  = (x_i, \{y_{ji} \}_{j \in [n] \setminus \{i\}})$ of agent $i$. In this diagram, 
$\chi \xrightarrow{\sigma^2} \psi$ implies that $\psi = \chi+ \omega$, where $\chi$ and $\omega$ are independent and $\omega \sim \cN (0,\sigma^2)$.
}
\label{fig:multi-agent}
\end{figure}

\section{Main Results}\label{sec:main}

In this section, we present the main results of this work including two results. The first result establishes nonexistence of an \textit{intuitive} equilibrium for global games with noisy sharing of information. The second result shows that under a certain regime, there exists a threshold policy based on each agent's private information and the (noisy) information that is shared with her by the other agents. 

\subsection{Nonexistence of an Intuitive Equilibrium}
In many instances of global games, one can show that there exists a threshold policy based on the expected value of the underlying fundamental given each agent's information. More precisely, let 
\[\bar{\theta}_i(y_i)=E[\theta\mid y_i].\]
It has been shown that if there is no sharing of information (i.e., $y_i=x_i$) and $\theta$ is uniformly distributed over $\R$ or $\theta$ has a Gaussian prior, then there exists a threshold value $t$ such that the policy $(1_{\bar{\theta}_1(y_1)\leq t},\ldots, 1_{\bar{\theta}_n(y_n)\leq t})$ is a Bayesian Nash Equilibrium for the global games described above \cite{morris2000}. 

In the case of global games with noisy sharing of information, one may hope to have a similar result, i.e., there exists a threshold value $t$ such that if everyone compares her expected value of the fundamental variable $\theta$ given her own information, the policy $(1_{\bar{\theta}_1(y_1)\leq t},\ldots, 1_{\bar{\theta}_n(y_n)\leq t})$ would be a Bayesian Nash Equilibrium. For example, in the case of a bank run, one may speculate that if each agent compares her expected strength of the economy given her own information, she should decide whether to take her money out of the bank or not, and this behavior would result in an equilibrium. However, the following result shows that such an \textit{intuitive} equilibrium for global games with noisy sharing of information does not exist. 

\begin{theorem}\label{thrm:nonexistence}
Consider the global game with noisy sharing of information as described in Definition \ref{def:gg}. Then, there do not exist threshold values $t_1,\ldots,t_n$ such that the policy $(1_{\bar{\theta}_1(y_1)\leq t_1},\ldots, 1_{\bar{\theta}_n(y_n)\leq t_n})$ is an equilibrium. 
\end{theorem}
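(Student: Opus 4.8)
The plan is to argue by contradiction. Suppose $t_1,\dots,t_n$ are thresholds for which the profile $s=(1_{\bar\theta_1(y_1)\le t_1},\dots,1_{\bar\theta_n(y_n)\le t_n})$ is an equilibrium; fix an agent $i$. I would first record the Gaussian structure forced by Assumption~\ref{assum:information}. Since $\theta$ carries an improper uniform prior and all noises are Gaussian, the posterior of $\theta$ given $y_i$ is Gaussian with affine mean: a precision‑weighting computation gives $\bar\theta_i(y_i)=\alpha\,x_i+\beta\,z_i$, with $\alpha\propto\sigma^{-2}$ and $\beta\propto(\sigma^2+\tau^2)^{-1}$, so $\alpha,\beta>0$ and, crucially, $\alpha/\beta=(\sigma^2+\tau^2)/\sigma^2>1$ precisely because $\tau>0$. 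Hence $\bar\theta_i$ depends on $y_i$ only through $(x_i,z_i)$, and the decision boundary $H_i:=\{u\in\R^n:\bar\theta_i(u)=t_i\}$ is an affine hyperplane. Likewise, for $j\ne i$ the pair $\bigl(\bar\theta_j(y_j),\,y_i\bigr)$ is jointly Gaussian (the improper prior here needs a brief limiting argument: replace it by $\cN(0,\rho^2)$ and let $\rho\to\infty$), so $\ell_j(u):=E[\bar\theta_j(y_j)\mid y_i=u]$ is affine in $u$, while the conditional variance $\mathrm{Var}(\bar\theta_j(y_j)\mid y_i=u)=v>0$ is a constant, independent of $u$ and — by the permutation symmetry among the agents other than $i$ — of $j$. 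Consequently $E[s_j(y_j)\mid y_i=u]=P\bigl(\bar\theta_j(y_j)\le t_j\mid y_i=u\bigr)=\Phi\!\bigl((t_j-\ell_j(u))/\sqrt v\bigr)$, a continuous function of $u$, where $\Phi$ denotes the standard normal cdf.

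Next I would turn the equilibrium requirement into a functional identity on $H_i$. By \eqref{eqn:equilibrium}, any best response of agent $i$ to $s^{-i}$ takes the risky action wherever $g_i(u):=1+\sum_{j\ne i}\Phi\!\bigl((t_j-\ell_j(u))/\sqrt v\bigr)-\bar\theta_i(u)>0$ and the safe action wherever $g_i(u)<0$; for this to coincide with $1_{\bar\theta_i(u)\le t_i}$ we need $g_i\ge0$ on $\{\bar\theta_i<t_i\}$ and $g_i\le0$ on $\{\bar\theta_i>t_i\}$. As $g_i$ is continuous and $\bar\theta_i$ is non‑constant affine, this forces $g_i\equiv0$ on $H_i$, that is,
\[
\sum_{j\ne i}\Phi\!\left(\frac{t_j-\ell_j(u)}{\sqrt v}\right)=t_i-1\qquad\text{for all }u\in H_i.
\]
Call this identity $(\star)$; in particular it requires $t_i\in(1,n)$.

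The remaining work is to contradict $(\star)$: its right side is constant, and I claim the left side is not. The case $n=2$ already exhibits the mechanism. There $z_i=y_{ji}$ and $\bar\theta_j(y_j)=\alpha x_j+\beta y_{ij}$, so $\ell_j(u)=\alpha\,E[x_j\mid x_i,y_{ji}]+\beta x_i$; computing $E[x_j\mid x_i,y_{ji}]=(1-\mu)x_i+\mu y_{ji}$ with $\mu=2\sigma^2/(2\sigma^2+\tau^2)$ gives $\ell_j(u)=\bigl(\alpha(1-\mu)+\beta\bigr)x_i+\alpha\mu\,y_{ji}$. Since $\Phi$ is injective, $(\star)$ would force $\ell_j$ to be constant on $H_i$, i.e.\ its gradient parallel to that of $\bar\theta_i$, which reduces to $\beta=\mu\alpha$, i.e.\ $\sigma^2/(\sigma^2+\tau^2)=2\sigma^2/(2\sigma^2+\tau^2)$ — impossible for every $\tau>0$. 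For general $n$ the permutation symmetry gives $\ell_j(u)=(p-q)\,y_{ji}+q\,z_i+r\,x_i+c$ with the same $p,q,r$ for all $j\ne i$, and a computation via $\ell_j=\alpha E[x_j\mid y_i]+\beta\sum_{k\ne j}E[x_k\mid y_i]$ together with $E[x_m\mid y_i]=(1-\nu)\bar\theta_i+\nu y_{mi}$, $\nu=\sigma^2/(\sigma^2+\tau^2)$, yields $p-q=\nu(\alpha-\beta)>0$ — again because $\tau>0$. Now perturb $u$ within $H_i$ so as to hold $x_i$ and $z_i$ fixed while moving $y_{j_0i}$ and $y_{j_1i}$ oppositely by $\delta$; only $\ell_{j_0}$ and $\ell_{j_1}$ change, by $\pm(p-q)\delta$, so $(\star)$ reduces to $\Phi(a-c\delta)+\Phi(b+c\delta)\equiv\mathrm{const}$ with $c=(p-q)/\sqrt v\ne0$, and since $\phi=\Phi'$ is even and injective on $[0,\infty)$ this forces $a+b=0$; as the base point in $H_i$ was arbitrary, $\ell_{j_0}+\ell_{j_1}$ must be constant on $H_i$. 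For $n\ge4$ this is impossible since $y_{j_0i}+y_{j_1i}$ still varies on $H_i$ while $p\ne q$; for $n=3$ it comes down to the single relation $\alpha\nu(\alpha+\beta)=2\beta^2$, which (using $\alpha/\beta=1/\nu$) holds only when $\tau=0$. Since $i$ was arbitrary, no $t_1,\dots,t_n$ can give an equilibrium.

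I expect the main effort to be the explicit Gaussian bookkeeping above — the precision‑weighted forms of $\bar\theta_i$, $\ell_j$, and $v$, the limiting justification that $v$ is $u$‑independent under the improper prior, and the verification that $\ell_{j_0}+\ell_{j_1}$ is genuinely non‑constant on $H_i$ — rather than any single conceptual leap. It is worth noting that every obstruction above is driven by $\tau>0$ (equivalently $\alpha>\beta$, or $\nu<1$): when $\tau=0$ all these strict inequalities degenerate to equalities and the intuitive profile can indeed be an equilibrium, consistent with the perfect‑sharing threshold results that this theorem is meant to contrast with.
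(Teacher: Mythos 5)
Your argument is correct, but it takes a genuinely different route from the paper's. Both proofs rest on the same Gaussian bookkeeping: your $\bar{\theta}_i(y_i)=\alpha x_i+\beta z_i$, the affine conditional means $\ell_j$, and the constant conditional variance $v$ are exactly the content of the paper's Lemmas~\ref{lemma:properties} and~\ref{itm:dist} and its linear identity $\bigl(\bar{\theta}_i(y_i),\{E[\bar{\theta}_j(y_j)\mid y_i]\}_{j\neq i}\bigr)=Vy_i^T$; and both ultimately exploit the single fact that $\tau>0$ makes the coefficient of $y_{ji}$ in $\ell_j$ strictly exceed that of $y_{ki}$, $k\neq j$ (your $p>q$ is the paper's $\gamma_n>\delta_n$). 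The contradictions differ. The paper shows $V$ is nonsingular, so it can pin $\bar{\theta}_i(y_i)$ at $t_i\mp\epsilon$ while driving every $E[\bar{\theta}_j(y_j)\mid y_i]$ to $+\infty$ (resp.\ $-\infty$); Chebyshev then forces each $P(\bar{\theta}_j(y_j)\leq t_j\mid y_i)$ toward $0$ (resp.\ $1$), and the best-response inequalities yield $t_i\leq 1$ and $t_i\geq n$ simultaneously. You instead derive the identity $\sum_{j\neq i}\Phi\bigl((t_j-\ell_j)/\sqrt{v}\bigr)=t_i-1$ on the decision hyperplane $H_i$ and destroy it by local perturbations, using injectivity of $\Phi$ and of $\phi$ on $[0,\infty)$ to reduce to algebraic relations among the precision weights that fail whenever $\tau>0$. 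I checked your three cases against the paper's coefficients $\beta_n,\gamma_n,\delta_n$: the $n=2$ parallelism condition does reduce to $\beta=\mu\alpha$, the $n=3$ constancy condition to $2\beta^2=\alpha\nu(\alpha+\beta)$, and the $n\geq4$ perturbation changes $\ell_{j_0}+\ell_{j_1}$ by $(\gamma_n-\delta_n)\delta\neq0$, so all three come out right; your tower-property identity $E[x_m\mid y_i]=(1-\nu)\bar{\theta}_i(y_i)+\nu y_{mi}$ is a clean shortcut to $p-q=\nu(\alpha-\beta)>0$ that avoids the paper's $c_n,d_n$ bookkeeping. What the paper's route buys is uniformity --- one global argument, no case split on $n$, and (as the paper remarks) robustness to more general monotone utilities, since it never uses the exact form of $\Phi$; what yours buys is an explicit picture of precisely which degeneracies would have to hold for the intuitive policy to survive, making transparent why $\tau=0$ is the boundary case. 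The only soft spot, which you flag yourself, is the limiting justification of joint Gaussianity under the improper prior; the paper handles this by direct computation of the conditional density in its appendix lemma, and the same computation closes that gap for you.
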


\subsection{Existence of a Threshold Policy Equilibrium}
In light of Theorem~\ref{thrm:nonexistence}, one may pose the question as how one can extend the existence of threshold policies for conditionally independent signals to the case of interdependent private signals. Here, we show that indeed we can extend such an existence result to the case of global games with sharing of information. 

Before presenting this result, the notion of threshold policies is extended from the case of a scalar private information to multi-dimension information vector. Consider the information available to agent $i$, i.e., $y_i=(x_i,\{y_{ji} \}_{j \in [n] \setminus \{i\}})$. For a function $h:\R^{n}\to\R$, we will be focusing on the symmetric threshold strategies $1_{h(y_i) \geq 0}$. A strategy is called symmetric if it does not depend on the index of the agents. Namely, in this case, the function $h$ is the same for all the agents. We will further limit our attention to the class of functions $h:\R^{n}\to\R$ which are continuous and strictly decreasing with respect to any of the $n$ input parameters, while the other input parameters are fixed. Monotone strategies are natural candidates to consider because if taking risky action is not appealing for a given observation, it is natural to consider strategies that assign the safe action to any larger observation. More precisely, let $y'_i$ be smaller than $y_i$ component-wise. If agent $i$ takes a risky action by observing $y_i$, it is intuitively expected that it takes the risky action by observing $y'_i$ as well. In other words, assuming a threshold strategy $1_{h(y_i) \geq 0}$, we expect that if $h(y_i) \geq 0$, then $h(y'_i) \geq 0$. A symmetry condition on $h$ is also required which will be clarified later in this section.

With this, the main existence results of this paper are stated in the next two theorems. 
\begin{theorem}\label{thrm:thm2}
Let $h: \R^n \rightarrow \R$ be a continuous and component-wise strictly decreasing function. Then $h$ leads to a threshold policy equilibrium if it is the solution to a certain functional equation characterized by the noise variance parameters of the system. 
\end{theorem}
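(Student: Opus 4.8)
The plan is to show that the \emph{symmetric} threshold profile $s=(s_1,\dots,s_n)$ defined by $s_i(y_i)=1_{h(y_i)\geq 0}$ is a Bayesian Nash equilibrium precisely when $h$ satisfies an associated fixed-point equation, and then to read off that equation. First I would specialize the best-response characterization \eqref{eqn:equilibrium} to this profile: if every agent $j\neq i$ plays $1_{h(y_j)\geq 0}$, then at $y_i=u$ the risky action lies in $BR_i(s^{-i})$ iff $1+E[\sum_{j\neq i}1_{h(y_j)\geq 0}\mid y_i=u]\geq E[\theta\mid y_i=u]$, the safe action lies in $BR_i(s^{-i})$ iff the reverse inequality holds, and both lie in it when equality holds. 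Consequently, a sufficient condition for $s$ to be an equilibrium is that $h$ solve
\begin{align}
h(u)\;=\;1+E\Bigl[\sum\nolimits_{j\neq i}1_{h(y_j)\geq 0}\;\Big|\;y_i=u\Bigr]-E[\theta\mid y_i=u]. \label{eq:fpe}
\end{align}
Indeed, if \eqref{eq:fpe} holds then $h(u)>0$ (resp.\ $h(u)<0$) forces the risky (resp.\ safe) action to be the \emph{unique} best response at $u$, and $\{u:h(u)=0\}$ is exactly where both actions are optimal. Equation \eqref{eq:fpe} is the ``certain functional equation'' of the statement.

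Next I would make the right-hand side of \eqref{eq:fpe} explicit and check that it is determined solely by $\sigma^{2},\tau^{2}$ and $n$. Under Assumption~\ref{assum:information}, with the improper uniform prior on $\theta$ and the noises independent across distinct indices, the coordinates of $y_i=(x_i,\{y_{ji}\}_{j\neq i})$ are conditionally independent given $\theta$, with $x_i\mid\theta\sim\cN(\theta,\sigma^{2})$ and $y_{ji}\mid\theta\sim\cN(\theta,\sigma^{2}+\tau^{2})$; hence the posterior of $\theta$ is Gaussian with mean
\begin{align}
E[\theta\mid y_i]\;=\;\frac{\sigma^{-2}\,x_i+(\sigma^{2}+\tau^{2})^{-1}z_i}{\sigma^{-2}+(n-1)(\sigma^{2}+\tau^{2})^{-1}},
\end{align}
an affine function of $y_i$ (with $z_i$ as in \eqref{eqn:z}) whose coefficients are explicit in $\sigma^{2},\tau^{2},n$. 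By exchangeability of the agents other than $i$ and the symmetry of $h$ in its shared-signal coordinates, $E[\sum_{j\neq i}1_{h(y_j)\geq 0}\mid y_i=u]=(n-1)\,P(h(y_j)\geq 0\mid y_i=u)$ for a fixed $j\neq i$, and the conditional law of $y_j$ given $y_i=u$ is again jointly Gaussian with mean affine in $u$ and constant covariance, all parameters being explicit in $\sigma^{2},\tau^{2},n$ (one combines the posterior of $\theta$ with the links $y_{ji}=x_j+\zeta_{ji}$ and $y_{ij}=x_i+\zeta_{ij}$). Substituting these expressions turns \eqref{eq:fpe} into a genuine functional equation $h=T_{\sigma,\tau}(h)$, where the operator $T_{\sigma,\tau}$ consists of a fixed Gaussian integral plus the affine term above.

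To conclude, I would take $h$ continuous, strictly decreasing in each of its $n$ arguments, symmetric in the shared-signal coordinates, and a solution of \eqref{eq:fpe}, and check directly that $1_{h(y_i)\geq 0}$ is an equilibrium. The cases $h(u)>0$ and $h(u)<0$ were already handled via \eqref{eqn:equilibrium} in the first step; on the remaining set $\{h=0\}$ both actions are best responses, so setting $s_i(u)=1$ there is admissible, and since the joint law of $(y_1,\dots,y_n)$ is absolutely continuous while $h$ is continuous and strictly monotone coordinatewise, $\{h=0\}$ is Lebesgue-null and therefore irrelevant to every agent's expected payoff. Hence $s_i\in BR_i(s^{-i})$ for all $i$, which proves Theorem~\ref{thrm:thm2}.

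Regarding the main obstacle: the argument above reduces the theorem to the \emph{existence} of a solution of \eqref{eq:fpe} within the stated regularity class, which is the content of the companion Banach-fixed-point result and is where the substantive difficulty sits. Within the present proof the only delicate steps are bookkeeping: (i) verifying that $T_{\sigma,\tau}$ preserves continuity, coordinatewise strict monotonicity, and symmetry in the shared coordinates --- this rests on $E[\theta\mid y_i]$ being strictly increasing in each coordinate of $y_i$ and on the Gaussian pair $(y_i,y_j)$ being positively associated, so that $u\mapsto P(h(y_j)\geq 0\mid y_i=u)$ is decreasing whenever $h$ is; and (ii) the almost-sure uniqueness of best responses, which is what lets us discard the indifference set. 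I expect the positive-association / stochastic-monotonicity check over these $n$-dimensional Gaussian observation vectors to be the fiddliest part, though conceptually routine.
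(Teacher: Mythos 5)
Your argument is correct, but it takes a genuinely different route from the paper's. You impose the fixed-point equation \emph{pointwise} --- $h(u)=1+E[\sum_{j\neq i}1_{h(y_j)\geq 0}\mid y_i=u]-E[\theta\mid y_i=u]$ for every $u$ --- so that the sign of $h$ coincides by construction with the sign of the incentive to take the risky action, and the equilibrium property then follows immediately from \eqref{eqn:equilibrium}. The paper instead works with a weaker functional equation (Definition~\ref{def0}): equality in \eqref{eq-fxp} is required only on the zero set of $h$, with $h$ reconstructed as $h(y_i)=\cI h(y_{i\setminus k})-y_{ki}$ in \eqref{eq-h}. The price of that weaker hypothesis is that the paper must prove the off-zero-set inequalities by a monotone comparison argument (proof of Theorem~\ref{thm-main}): move $y_i$ coordinatewise to a root $y_i'$ of $h$ and use the monotonicity of the conditional means in Lemma~\ref{itm:dist}, of $\cI h$, and of $\varphi$ to get $P(h(y_k)\geq 0\mid y_i)\geq P(h(y_k)\geq 0\mid y_i')$. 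That comparison is exactly the ``positive association'' step you flag as the fiddly part of your own route, so neither version escapes it; but the paper needs it only as a one-sided inequality relative to the zero set, whereas you need the stronger fact that your operator preserves strict coordinatewise monotonicity on all of $\R^n$ for your solution class to be nonempty --- a burden you defer to the existence result (Theorem~\ref{thrm:thm3}), whose contraction is, however, set up for the paper's operator acting on $g$ (essentially $\cI h$) over $\R^{n-1}$, not for your pointwise operator on $\R^n$. The two formulations are reconciled only because the resulting functions share the same zero set, which is all that determines the policy. One small inaccuracy: $E[\sum_{j\neq i}1_{h(y_j)\geq 0}\mid y_i=u]$ is not $(n-1)P(h(y_j)\geq 0\mid y_i=u)$ for a single fixed $j$, since by Lemma~\ref{itm:dist} the conditional mean of $y_k$ given $y_i=u$ weights the coordinate $y_{ki}$ of $u$ (coefficient $d_n$) differently from the remaining shared coordinates (coefficient $c_nb_{n-1}$), so the summands genuinely depend on $k$; this is harmless because only the sum enters the functional equation.
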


\begin{theorem}\label{thrm:thm3}
For any given $n$, there exists an unbounded set of parameters $\Theta\subset \R^{+2}$ such that for any $(\sigma,\tau)\in \Theta$ there exists a symmetric Bayesian Nash Equilibrium $(1_{h(y_1) \geq 0},\ldots,1_{h(y_n) \geq 0})$ for some continuous and component-wise decreasing function $h:\R^n\to\R$. Furthermore, $h$ can be approximated with an arbitrary precision (in $\cL_\infty$ norm).
\end{theorem}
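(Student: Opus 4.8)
\emph{Proof proposal (for Theorem~\ref{thrm:thm3}).} The plan is to produce the function $h$ of Theorem~\ref{thrm:thm2} as the unique fixed point of the operator attached to that theorem's functional equation, via the Banach fixed point theorem; the resulting Picard iteration then automatically supplies the claimed $L_\infty$-approximation. Instantiating the best-response condition \eqref{eqn:equilibrium} for a symmetric threshold profile $\big(1_{h(y_1)\ge 0},\dots,1_{h(y_n)\ge 0}\big)$ shows (up to a null set, which is irrelevant since all the laws involved are continuous) that the profile is an equilibrium exactly when
\begin{align}\label{eqn:functeq}
h(y_i)=1+\sum_{j\neq i}P\big(h(y_j)\ge 0\;\big|\;y_i\big)-E[\theta\mid y_i],
\end{align}
which is the functional equation of Theorem~\ref{thrm:thm2}; write $\mathcal{F}$ for the map sending $h$ to the right-hand side of \eqref{eqn:functeq}. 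Two reductions simplify the analysis. Since $\theta$ carries an improper-uniform prior and, conditioned on $\theta$, the incoming signals $\{y_{ji}\}_{j\neq i}$ are i.i.d.\ $\mathcal{N}(\theta,\sigma^2+\tau^2)$, Gaussian sufficiency gives $E[\theta\mid y_i]=a\,x_i+b\,z_i$ with $z_i=\sum_{j\neq i}y_{ji}$ as in \eqref{eqn:z}, $a=a(\sigma^2,\tau^2)=\frac{\sigma^2+\tau^2}{n\sigma^2+\tau^2}\in[\tfrac1n,1)$ and $b>0$; together with the symmetry hypothesis on $h$ this also lets one regard the two nonlinear terms as acting through a low-dimensional sufficient statistic of $y_i$. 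It is therefore convenient to subtract off the unbounded linear part and run the iteration on $g:=h+E[\theta\mid\cdot\,]$, which on the image of $\mathcal{F}$ takes values in $[1,n]$, inside the Banach space $C_b$ of bounded continuous functions with the sup-norm.

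The first step is to isolate a closed, $\mathcal{F}$-invariant subset $\mathcal{S}\subseteq C_b$ (identifying $h\leftrightarrow g$): let $\mathcal{S}$ correspond to the continuous, suitably symmetric $h$ that are component-wise non-increasing and decrease at rate at least $a$ in their own-signal coordinate (i.e.\ $\partial h/\partial x_i\le -a$ in the distributional sense). Invariance is read off \eqref{eqn:functeq}: the term $-E[\theta\mid y_i]$ already contributes own-signal slope $-a$; increasing any coordinate of $y_i$ shifts the conditional law of each $y_j$ stochastically upward (monotone-likelihood-ratio property of the Gaussian family) while $\{h\ge 0\}$ is a downward-closed set, so each $P(h(y_j)\ge 0\mid y_i)$ is non-increasing in every coordinate of $y_i$; continuity of $\mathcal{F}h$ holds because $y_i\mapsto$ (the Gaussian conditional density of $y_j$ given $y_i$) is $L^1$-continuous; and symmetry is inherited from the exchangeability of the agents. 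Closedness of $\mathcal{S}$ under $\|\cdot\|_\infty$ is clear since uniform limits preserve continuity, weak monotonicity, the slope bound, and symmetry. Crucially, \emph{every} function in the image of $\mathcal{F}$ is strictly component-wise decreasing (again owing to the $-E[\theta\mid y_i]$ term), so the fixed point meets the strict-monotonicity requirement of the statement even though $\mathcal{S}$ only imposes the weak version --- the usual way around the fact that strict monotonicity is not $\|\cdot\|_\infty$-closed.

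The heart of the argument, and the step I expect to be the main obstacle, is the contraction estimate. Fix $h_1,h_2\in\mathcal{S}$; because $\{h_r\ge 0\}$ is downward-closed, on the slice obtained by freezing every coordinate of $y_j$ except $x_j$ it is a half-line $\{x_j\le\psi_r(\cdot)\}$, and the own-signal slope bound forces $|\psi_1-\psi_2|\le a^{-1}\|h_1-h_2\|_\infty$ pointwise, so the two threshold indicators disagree only on $\{|x_j-\psi_1(\cdot)|\le a^{-1}\|h_1-h_2\|_\infty\}$ and hence
\begin{align}\label{eqn:contr}
\big|(\mathcal{F}h_1)(y_i)-(\mathcal{F}h_2)(y_i)\big|\le(n-1)\,P\big(|x_j-\psi_1(\cdot)|\le a^{-1}\|h_1-h_2\|_\infty\;\big|\;y_i\big).
\end{align}
Conditioning further on all coordinates of $y_j$ other than $x_j$ turns the right side of \eqref{eqn:contr} into a Gaussian tail for $x_j$ with variance $v=v(\sigma^2,\tau^2):=\mathrm{Var}(x_j\mid y_i,\{y_{kj}\}_{k\neq j})>0$, a positive constant uniform in the conditioning values; since a Gaussian gives an interval of length $\ell$ mass at most $\ell/\sqrt{2\pi v}$, $\mathcal{F}$ is Lipschitz on $\mathcal{S}$ with constant $\kappa(\sigma^2,\tau^2)=\frac{2(n-1)}{a\sqrt{2\pi v}}$. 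It remains to show $\Theta:=\{(\sigma^2,\tau^2):\kappa(\sigma^2,\tau^2)<1\}$ is unbounded; as $a\ge 1/n$ always, it suffices to make $v$ large, and from the explicit Gaussian-conditioning formula, when the sharing noise $\tau^2$ is large the direct link $y_{ji}=x_j+\zeta_{ji}$ becomes uninformative and $v\to\mathrm{Var}(x_j\mid x_i)=2\sigma^2$, so every pair with $\tau^2$ large enough and $\sigma^2$ above an explicit threshold of order $(n-1)^2/\pi$ lies in $\Theta$ --- an unbounded subset of $\R^{+2}$, living in the ``noisy sharing'' regime, consistent with the abstract. By the Banach fixed point theorem $\mathcal{F}$ has a unique fixed point $h^\star\in\mathcal{S}$; Theorem~\ref{thrm:thm2} then makes $\big(1_{h^\star(y_1)\ge 0},\dots,1_{h^\star(y_n)\ge 0}\big)$ a symmetric Bayesian Nash equilibrium with $h^\star$ continuous and component-wise strictly decreasing, and for any starting point $h_0$ the iterates $h_{k+1}=\mathcal{F}h_k$ satisfy $\|h_k-h^\star\|_\infty\le\kappa^k(1-\kappa)^{-1}\|h_1-h_0\|_\infty\to 0$, giving the asserted $L_\infty$-approximation. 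What remains is routine: verifying the stochastic-monotonicity and slope claims carefully, and computing $a$, $b$, $v$ in closed form to pin down $\Theta$ quantitatively.
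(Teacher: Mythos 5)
Your proposal is correct in its essential structure, but it reaches the fixed point by a genuinely different route than the paper. The paper does not iterate on $h$ itself: it passes to the ``inverse threshold'' function $\cI h:\R^{n-1}\to\R$ obtained by solving $h(y_i)=0$ for one shared-signal coordinate, rewrites the indifference condition \eqref{eq-fxp} as a fixed point equation for the bounded function $g$ of \eqref{def-g}, and defines the operator $\cT$ of \eqref{eq5}; the contraction constant there comes from the $1$-Lipschitz property of the normal CDF $\varphi$ together with a separate lemma (\Lref{lem2}) showing that Lipschitz continuity with parameter $a_n$ is preserved by $\cT$, yielding the algebraic sufficient condition $w_n<\tau$ of \Tref{thm2}. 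You instead iterate directly on $h$ (shifted by $E[\theta\mid y_i]$ so as to land in $[1,n]$), replace the Lipschitz-preservation lemma by an own-signal slope bound $\partial h/\partial x_i\le -a_n$ preserved via stochastic monotonicity, and obtain the contraction from a ``thin slab'' estimate: on each $x_j$-slice the sets $\{h_1\ge 0\}$ and $\{h_2\ge 0\}$ differ on an interval of length at most $2a_n^{-1}\|h_1-h_2\|_\infty$, whose conditional probability is controlled by the constant conditional variance $v=\mathrm{Var}(x_j\mid y_i,\{y_{kj}\}_{k\ne j})>0$. A second structural difference: because your fixed point equation forces $h$ to \emph{equal} $1+\sum_{j\ne i}P(h(y_j)\ge 0\mid y_i)-E[\theta\mid y_i]$ everywhere, the equilibrium property of the fixed point is immediate and you bypass the two-case monotonicity argument of \Tref{thm-main}, which the paper needs precisely because it imposes the indifference condition only on the zero set of $h$. (Your phrase ``exactly when'' overstates this equivalence---the functional equation is sufficient but not necessary for equilibrium, since only the zero set of $h$ matters and $h$ is far from unique---but you use only the sufficient direction, so nothing breaks.) Both arguments conclude unboundedness of $\Theta$ by driving the noise up (the paper scales $(\sigma,\tau)$ along a fixed ray $\tau/\sigma=r$ until $\tau>w_n$; you send $\tau^2\to\infty$ so that $v\to 2\sigma^2$ and then require $\sigma^2$ above an explicit threshold), and both extract the $L_\infty$-approximation from the geometric convergence of the Picard iterates. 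What your route buys is a more elementary and more interpretable contraction constant; what remains for you is the deferred closed-form computation of $v$ and the careful verification of the stochastic-monotonicity and invariance claims, which do go through using the explicit conditional means in \eqref{eq:mean}.
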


The proof of \Tref{thrm:thm3} will be based on defining a contraction mapping on the space of feasible strategy functions $h$ and then utilizing Banach fixed point theorem to establish the existence of $h$ that solves the functional equation of \Tref{thrm:thm2}. Consequently, it is shown that an iterative method of applying the contraction mapping will result in the convergence to the desired $h$ which can be used as an approximation method to find $h$ with arbitrary precision.

We will also characterize a sufficient condition on the noise parameters $(\sigma,\tau)$ to be contained in the set $\Theta$ of \Tref{thrm:thm3}. In particular, it is shown that for a fixed ratio of $r=\frac{\tau}{\sigma}$, $(\sigma,r \sigma) \in \Theta$, for large enough $\sigma$. This implies that the set $\Theta$ is unbounded in any direction and the set $\Theta^c$ is bounded in any dierction. Roughly speaking, for a sufficiently noisy environment there exists a threshold type Bayesian Nash Equilibrium. 
 
One may conjecture that \Tref{thrm:thm3} should hold for the whole set of noise parameters and it should be independent from the noise parameters, i.e., $\Theta = \R^{+2}$. However, we indeed conjecture that such a symmetric equilibrium does not exist for arbitrary $\sigma,\tau>0$. 

The above three theorems suggest that in the settings that global games are relevant, one can not be oblivious to the information structure details and only relies on the estimate of the underlying fundamental $\theta$.

\section{Proofs}
\label{proof}

In this section, we provide the proof of the main results discussed in Section~\ref{sec:main}. 

The following calculations for the statistics of conditional Gaussian random vectors are needed in the proof of the main results. Next lemma derives the probability distribution of $\theta$ conditioned on the observation vector $y_i$ of the $i$-th agent.

\begin{lemma}\label{lemma:properties}
	Suppose that Assumption~\ref{assum:information} holds.  Let $\eta^2_n$ be a scalar such that
	$$
	\frac{1}{\eta^2_{n}} = {\frac{1}{\sigma^2} + \frac{n-1}{\sigma^2 + \tau^2}}.
	$$
	Further, let $a_n :=    \frac{\eta_n^2}{\sigma^2}$ and $b_n :=    \frac{\eta_n^2}{\sigma^2 + \tau^2}$.
	Then, conditioned on agent $i$'s observation $y_i$, $\theta$ is given by
	\begin{align*}
	\theta  = a_n x_i + b_nz_i + \ep_\theta,
	\end{align*}
	where $\ep_\theta$ is a $\cN(0,\eta^2_n)$ Gaussian random variable independent of $y_i$.
\end{lemma}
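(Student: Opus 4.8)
The plan is to obtain the posterior law of $\theta$ given $y_i$ directly from Bayes' rule. First I would rewrite the shared signals in terms of the fundamental: since $x_j=\theta+\xi_j$, we have $y_{ji}=\theta+\xi_j+\zeta_{ji}$ for each $j\neq i$. By Assumption~\ref{assum:information} the variables $\xi_1,\dots,\xi_n$ and $\{\zeta_{ji}\}$ are mutually independent, and distinct coordinates of $y_i=(x_i,\{y_{ji}\}_{j\neq i})$ involve disjoint sets of these noise terms; hence, conditioned on $\theta$, the $n$ coordinates of $y_i$ are independent Gaussians, with $x_i\sim\cN(\theta,\sigma^2)$ and $y_{ji}\sim\cN(\theta,\sigma^2+\tau^2)$. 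This gives the conditional density $p(y_i\mid\theta)$ as an explicit product of $n$ Gaussian factors in $\theta$.

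Next, because $\theta$ carries the (improper) uniform prior on $\R$, the posterior density is proportional to the likelihood, $p(\theta\mid y_i)\propto \exp\!\big(-\tfrac{(x_i-\theta)^2}{2\sigma^2}\big)\prod_{j\neq i}\exp\!\big(-\tfrac{(y_{ji}-\theta)^2}{2(\sigma^2+\tau^2)}\big)$. (If one prefers to avoid the improper prior, the same expression arises by taking $\theta\sim\cN(0,\rho^2)$ and letting $\rho\to\infty$; the limit is legitimate since the likelihood is integrable in $\theta$.) Expanding the exponent and completing the square in $\theta$ yields $-\tfrac{1}{2\eta_n^2}(\theta-\mu_i)^2$ up to an additive term not depending on $\theta$, where $\tfrac{1}{\eta_n^2}=\tfrac{1}{\sigma^2}+\tfrac{n-1}{\sigma^2+\tau^2}$ is precisely the sum of the precisions and $\mu_i=\eta_n^2\big(\tfrac{x_i}{\sigma^2}+\tfrac{1}{\sigma^2+\tau^2}\sum_{j\neq i}y_{ji}\big)=a_n x_i+b_n z_i$, using $z_i=\sum_{j\neq i}y_{ji}$ from \eqref{eqn:z} and the definitions of $a_n,b_n$. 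Thus the conditional law of $\theta$ given $y_i$ is $\cN(a_n x_i+b_n z_i,\,\eta_n^2)$.

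Finally I would set $\ep_\theta:=\theta-a_n x_i-b_n z_i$. The previous step shows that, conditioned on $y_i$, $\ep_\theta\sim\cN(0,\eta_n^2)$; since this conditional distribution does not depend on the value of $y_i$, $\ep_\theta$ is independent of $y_i$, which is exactly the claimed decomposition $\theta=a_n x_i+b_n z_i+\ep_\theta$.

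I do not expect a genuine obstacle here: this is a standard Gaussian conditioning / completion-of-the-square argument. The only point requiring a word of care is the improper uniform prior on $\theta$, which is handled either by manipulating unnormalized densities (as is customary in this literature) or by the limiting argument above. Everything else — expanding the product, collecting the quadratic and linear terms in $\theta$, and identifying $a_n$, $b_n$, and $\eta_n^2$ — is routine algebra that I would not belabor.
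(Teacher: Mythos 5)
Your proposal is correct and follows essentially the same route as the paper: the paper's proof simply observes that $(x_i,\{y_{ji}\}_{j\neq i})$ are conditionally independent Gaussian observations of $\theta$ with variances $\sigma^2$ and $\sigma^2+\tau^2$ and then invokes Lemma~\ref{lem:parallel} from the appendix, whose proof is exactly the Bayes'-rule/completion-of-the-square computation you carry out inline. Your extra remarks on handling the improper prior are a harmless elaboration of what the paper leaves implicit.
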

The proof can be found in Appendix.

To investigate the existence of a threshold policy for agent $i$, we further need to derive the distribution of agent $k$'s observation vector $y_k$ given agent $i$'s observation $y_i$. In a sense, this is agent $i$'s perception of what is available to agent $k$.

\begin{lemma}
	\label{itm:dist}
	Let Assumption~\ref{assum:information} hold. Further, let
	$$
	\frac{1}{\gamma_n^2} =   \frac{1}{\tau^2} + \frac{1}{\eta_{n-1}^2 + \sigma^2},
	$$
	and define
	$c_n :=   \frac{\gamma_n^2}{\eta_{n-1}^2 + \sigma^2}$ and $d_n := \frac{\gamma_n^2}{\tau^2}$.
	Then, conditioned on $y_i$, we can write
	$y_k=(x_k, \{y_{lk} \}_{l \in [n] \setminus \{k\}} )$  as jointly Gaussian random variables defined by:
	$$
	x_k =\overline{x_k} + \ep_k ,~~
	y_{ik} = x_i + \ep_{ik},~~
	y_{lk} = \overline{y_{lk}}+ \ep_{lk},
	$$
	where $l \neq i$ and\footnote{Note that $\overline{x_k}$ and $\overline{y_{lk}}$ are functions of the observation vector $y_i$ of agent $i$ but the dependence is left implicit for brevity.}
	\begin{align}
	\left\{
	\begin{array}{l}
	\overline{x_k} = c_na_{n-1}  x_i + d_n y_{ki} + c_n b_{n-1}\sum_{j \in [n] \setminus \{i,k\}} y_{ji},\\
	\overline{y_{lk}} = c_na_{n-1}  x_i + d_n y_{li} + c_n b_{n-1}\sum_{j \in [n] \setminus \{i,l\}} y_{ji},
	\end{array}
	\right.
	\label{eq:mean}
	\end{align}
	and $(\ep_k, \{\ep_{lk}\}_{l \in [n] \setminus \{k\}  } )$ are jointly zero mean Gaussian random variables independent of $(x_i, \{y_{ji} \}_{j \in [n] \setminus \{i\}} )$; and $\ep_{ik}$ is $\cN(0, \tau^2)$ independent of $\epsilon:=(\ep_k, \{\ep_{lk}\}_{l \in [n] \setminus \{i,k\}  } )$.
\end{lemma}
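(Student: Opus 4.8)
The plan is to construct the conditional law of $y_k=(x_k,\{y_{lk}\}_{l\in[n]\setminus\{k\}})$ given $y_i$ by conditioning on $y_i$ \emph{one observation group at a time}, tracking which Gaussian noise terms are ``fresh'' with respect to $y_i$ (i.e.\ appear in no coordinate of $y_i$). The relayed message $y_{ik}=x_i+\zeta_{ik}$ is handled first and is immediate: $\zeta_{ik}$ occurs in no observation other than $y_{ik}$, so it is independent of $y_i$, of $x_k=\theta+\xi_k$, and of every $y_{lk}=\theta+\xi_l+\zeta_{lk}$ with $l\neq i,k$; and since $x_i$ is a function of $y_i$, conditioning on $y_i$ leaves $y_{ik}=x_i+\ep_{ik}$ with $\ep_{ik}=\zeta_{ik}\sim\cN(0,\tau^2)$, still independent of $y_i$ and of the residuals produced below --- which is exactly the claimed behavior of $\ep_{ik}$.

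For $x_k$ I would split $y_i=(y_i^{-k},y_{ki})$ with $y_i^{-k}:=(x_i,\{y_{ji}\}_{j\in[n]\setminus\{i,k\}})$ and condition in two stages. The coordinates of $y_i^{-k}$ are $n-1$ conditionally independent Gaussian observations of $\theta$ --- one of variance $\sigma^2$ and $n-2$ of variance $\sigma^2+\tau^2$ --- none of which involves $\xi_k$; so Lemma~\ref{lem:parallel} applied to these $n-1$ observations (its posterior precision is precisely $1/\eta_{n-1}^2$) gives, conditioned on $y_i^{-k}$, that $\theta=m_k+\ep'_\theta$ with $m_k:=a_{n-1}x_i+b_{n-1}\sum_{j\in[n]\setminus\{i,k\}}y_{ji}$ and $\ep'_\theta\sim\cN(0,\eta_{n-1}^2)$ independent of $(y_i^{-k},\xi_k)$, whence $x_k=\theta+\xi_k$ is, conditioned on $y_i^{-k}$, Gaussian with mean $m_k$ and variance $\eta_{n-1}^2+\sigma^2$. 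Adjoining $y_{ki}=x_k+\zeta_{ki}$, with $\zeta_{ki}\sim\cN(0,\tau^2)$ fresh relative to $(y_i^{-k},x_k)$, is then one Gaussian update --- again an instance of Lemma~\ref{lem:parallel}, combining precision $1/(\eta_{n-1}^2+\sigma^2)$ with precision $1/\tau^2$ --- giving, conditioned on $y_i$, $x_k=c_n m_k+d_n y_{ki}+\ep_k$ with $\ep_k\sim\cN(0,\gamma_n^2)$ independent of $y_i$; expanding $m_k$ reproduces the expression $\overline{x_k}$ in \eqref{eq:mean}. Running the identical two-stage argument with $k$ replaced by any $l\in[n]\setminus\{i,k\}$ (splitting off $y_{li}$ instead of $y_{ki}$) yields $x_l=\overline{y_{lk}}+\ep'_l$ with $\ep'_l\sim\cN(0,\gamma_n^2)$ independent of $y_i$, and then $y_{lk}=x_l+\zeta_{lk}=\overline{y_{lk}}+\ep_{lk}$ with $\ep_{lk}:=\ep'_l+\zeta_{lk}$, the noise $\zeta_{lk}$ again being fresh.

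It remains to record the joint structure. The vector $(x_k,\{y_{lk}\}_{l\in[n]\setminus\{k\}},y_i)$ is jointly Gaussian (each coordinate is affine in $\theta$ and the independent Gaussian noises, and the improper uniform prior on $\theta$ is dealt with exactly as in the proof of Lemma~\ref{lemma:properties}, via Lemma~\ref{lem:parallel}), so the residual vector $(\ep_k,\{\ep_{lk}\}_{l\in[n]\setminus\{k\}})=(x_k,\{y_{lk}\}_{l\neq k})-E\!\left[(x_k,\{y_{lk}\}_{l\neq k})\mid y_i\right]$ is jointly Gaussian and, being uncorrelated with $y_i$, independent of $y_i$. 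Independence of $\ep_{ik}=\zeta_{ik}$ from $\ep=(\ep_k,\{\ep_{lk}\}_{l\in[n]\setminus\{i,k\}})$ then follows because $\zeta_{ik}$ appears neither in $x_k$, nor in $\{y_{lk}\}_{l\neq i,k}$, nor in $y_i$, which together determine $\ep$.

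The step I expect to need the most care is the independence bookkeeping in the two-stage conditioning: at each stage one must verify that the newly adjoined noise ($\zeta_{ki}$, resp.\ $\zeta_{lk}$) is independent not merely of the current conditioning set but of every residual extracted so far, and that conditioning first on $y_i^{-k}$ and then on $y_{ki}$ genuinely reproduces $E[\,\cdot\mid y_i]$ with a $y_i$-independent residual. Both are routine for jointly Gaussian families, but on account of the improper prior on $\theta$ they are cleanest when phrased through Lemma~\ref{lem:parallel} rather than via a blanket Gaussian-conditioning appeal; granting that, the values $1/\gamma_n^2$, $c_n$, $d_n$ and the means in \eqref{eq:mean} come out of the same precision-weighting already used for Lemma~\ref{lemma:properties}.
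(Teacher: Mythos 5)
Your proof is correct and follows essentially the same route as the paper: both reduce everything to the precision-weighting of Lemma~\ref{lem:parallel} and obtain the independence of $\ep_{ik}$ from the remaining residuals by noting that $\zeta_{ik}$ appears in no other observation. Your two-stage conditioning (first on $y_i^{-k}$, then on $y_{ki}$) is in fact the careful way to instantiate the paper's one-line claim that the coordinates of $y_i$ are ``independent observations of $x_k$'' --- literally they are not, since all but $y_{ki}$ share the noise $\xi_k$ --- and it is your version, rather than that literal reading, that reproduces the stated constants $\gamma_n^2$, $c_n$, $d_n$ and the means in \eqref{eq:mean}.
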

The proof can be found in Appendix.

\subsection{Proof of Theorem \ref{thrm:nonexistence}}

The underlying idea of the proof can be summarized as follows. Assume to the contrary that $(1_{\bar{\theta}_1(y_1)\leq t_1},\ldots, 1_{\bar{\theta}_n(y_n)\leq t_n})$ is an equilibrium for some $t_1,\ldots,t_n\in \R$, where $\bar{\theta}_i(y_i) = a_nx_i+b_nz_i$ by  \Lref{lemma:properties}. Then the best response strategy of agent $i$, given the strategies of other agents $1_{a_nx_j+b_nz_j\leq t_1}$ is given by \eq{eqn:equilibrium}. This must be equivalent to $1_{a_nx_i+b_nz_i\leq t_1}$ by the definition of the Bayesian Nash equlibrium. However, we use the non-singularity of the system of linear equations describing the noise variances to show that this can not happen resulting in a contradiction.

   Let $\beta_n, \gamma_n, \delta_n$ be defined as follows:
   \begin{align*}
   \beta_n &:= b_n + a_n c_n a_{n-1} + (n-2) b_n c_n a_{n-1}\\
   \gamma_n &:= a_nd_n + (n-2) b_n c_n b_{n-1}\\
 \delta_n &:= a_nc_nb_{n-1} + b_n d_n + (n-3) b_n c_n b_{n-1}
   \end{align*}
   Then let $V = [v_{ij}]_{n \times n}$ be a matrix with the following entries:
   \begin{align*}
   v_{ij} = \left\{
	\begin{array}{l}
	a_n,\ \text{if}\ i = 1, j=1,\\
	b_n,\ \text{if}\ i = 1, 1 < j \leq n ,\\
	\beta_n, \ \text{if}\ 1 < i \leq n , j = 1,\\
	\gamma_n,\ \text{if}\ 1 <i = j\leq n, \\
	\delta_n,\ \text{if}\ 1 < i\neq j \leq n, \\
	\end{array}
	\right..
   \end{align*}
   In other words, $V$ has the following structure:
   \begin{align*}
   V = \left(
   \begin{array}{ccccc}
    a_n    & b_n & b_n&\ldots& b_n\\
   \beta_n &\gamma_n&\delta_n&\ldots&\delta_n\\
   \beta_n &\delta_n &\gamma_n&\ldots&\delta_n\\
	\vdots &\vdots &\vdots&\ddots&\vdots\\
	\beta_n &\delta_n &\delta_n&\ldots&\gamma_n
   \end{array}\right).
   \end{align*}
   
   By Lemma~\ref{lemma:properties} and \Lref{itm:dist}, we have 
   \be{main-thm11}
   \bigl(\bar{\theta}_i(y_i), \left\{\EXP{\bar{\theta}_j(y_j)\mid y_i}\right\}_{j \neq i} \bigr) = V y_i ^T
   \ee
 It is assumed that $\tau > 0$, i.e., information sharing between agents is noisy and not perfect. This leads to $a_n > b_n$ and consequently $\gamma_n > \delta_n$. We next show that this implies that $V$ is non-singular. By subtracting the first column of $V$ scaled by $\delta_n/\beta_n$ from all other columns we get 
   \begin{align*}
   V' = \left(
   \begin{array}{ccccc}
   a_n    & b_n-\frac{\delta_n}{\beta_n}a_n& b_n-\frac{\delta_n}{\beta_n}a_n&\ldots& b_n-\frac{\delta_n}{\beta_n}a_n\\
   \beta_n &\gamma_n-\delta_n&0&\ldots&0\\
   \beta_n &0 &\gamma_n-\delta_n&\ldots&0\\
   \vdots &\vdots &\vdots&\ddots&\vdots\\
   \beta_n &0 &0&\ldots&\gamma_n-\delta_n
   \end{array}\right).
   \end{align*}

It can be observed that $b_n-\delta_n a_n / \beta_n < 0$ and hence, by subtracting the first row scaled by $\frac{\bet_n}{a_n}$  from the other rows, $V'$ is made lower triangular with strictly positive diagonal elements. Therefore, $V'$ and consequently $V$ are non-singular.

   Using the fact that $V$ is non-singular together with \eq{main-thm11} one can find $y_i \in \R^n$ such that  $\bar{\theta}_i(y_i) = a_nx_i+b_nz_i = t_i - \epsilon$, for an arbitrary $\epsilon >0$, and $\EXP{\bar{\theta}_j(y_j)\mid y_i} > M$, for $j \neq i$, for any arbitrarily large number $M \in \R^+$. Note that the variance of $\bar{\theta}_j(y_j)$ given $y_i$ is a constant function of $\sigma^2$ and $\tau^2$ (by Lemma~\ref{itm:dist}). Therefore, using Chebyshev's inequality, for the given threshold values $t_1,\ldots,t_n$, one can find $y_i\in \R^n$ such that $\bar{\theta}_i(y_i) = t_i - \epsilon$ and
   \begin{align*}
   P(\bar{\theta}_j(y_j)\leq t_j \mid y_i)\leq \epsilon.
   \end{align*}
   But by the structure of an equilibrium \eqref{eqn:equilibrium}, we should have
   \begin{align*}
   (n-1)\epsilon+1&\geq \sum_{j\not=i}P(\bar{\theta}_j(y_j)\leq t_j \mid y_i)+1\cr
   &\geq \bar{\theta}_i(y_i) = t_i - \epsilon.
   \end{align*}
 Since the above inequality must hold for any $\epsilon > 0$, we have $t_i \leq 1$.

   On the other hand, using the same argument for any $\epsilon>0$, one can find $y_i$ such that $\bar{\theta}_i(y_i)=t_i+\epsilon$ and $\EXP{\bar{\theta}_j(y_j)\mid y_i} < -M$, for $j \neq i$, for any arbitrarily large number $M \in \R^+$. Hence,
   \begin{align*}
   t_i+\epsilon&\geq \bar{\theta}_i(y_i)\cr
   &> E(\sum_{j\not=i}1_{\bar{\theta}_j(y_j)\leq t_j}\mid y_i)+1\cr
   &\geq 1+(1-\epsilon)(n-1)=n-(n-1)\epsilon.
   \end{align*}
   Since, this holds for any $\epsilon>0$, it follows that $t_i\geq n$ which contradicts $t_i\leq 1$. Therefore, such a threshold equilibrium does not exist.

The above proof can be extended to a more general case of utility functions that are monotone with respect to the actions of the players and also are continuous functions of the underlying fundamental $\theta$.

\subsection{Proof of Theorem \ref{thrm:thm2}}
\label{sec:proof2}
We start by introducing the set of threshold policies that we will be focusing on. Such policies are characterized by certain \emph{threshold functions}. We impose some natural constraints on the threshold function $h$. The description of the best response strategies, formulated in \eq{eqn:equilibrium}, provides a necessary condition on $h$ that leads to threshold-type Bayesian Nash equilibrium. This condition can be interpreted as $h$ being a fixed point to a certain functional equation resulting from \eq{eqn:equilibrium}. Then we exploit natural properties of $h$, such as being continuous, monotone ,and symmetric, to conclude that such condition is also sufficient for having a threshold-type Bayesian Nash equilibrium characterized by $h$.

Consider the information available to agent $i$, i.e., $y_i=(x_i,\{y_{ji} \}_{j \in [n] \setminus \{i\}})$. For a function $h:\R^{n}\to\R$, we will be focusing on the threshold strategies $1_{h(y_i) \geq 0}$. We will further limit our attention to the class of functions $h:\R^{n}\to\R$ which are continuous and strictly decreasing with respect to any of the $n$ input parameters, while the other input parameters are fixed. A symmetry condition on $h$ is also required which will be clarified later in this section.

 The goal is to show there exists, under some conditions on noise parameters $\sigma^2,\tau^2$, a function $h: \R^n \rightarrow \R$ with the above conditions, such that the strategy profile $s=(1_{h(y_1) \geq 0},\ldots,1_{h(y_n)\geq 0})$ is an equilibrium for the global games with noisy sharing of information, where $s_i=1_{h(y_i) \geq 0}$ is the threshold policy that prescribes:
\be{eq-policy}
\begin{split} 
	\left\{
   \begin{array}{l}
\alpha_i = 1\ \ \text{if}\ \ h(y_i) \geq 0 \\
\alpha_i = 0\ \ \text{if}\ \ h(y_i) < 0
\end{array}
\right.
\end{split}
\ee
for agent $i$. We refer to such a function $h$ as a \emph{threshold function}, and we refer to the resulting symmetric strategy profile 
\begin{align}\label{eqn:threshold}
s=(1_{h(y_1) \geq 0},\ldots,1_{h(y_n)\geq 0})
\end{align}
as a symmetric threshold policy. If the strategy profile given in \eqref{eqn:threshold} is an equilibrium for the underlying game, we say that the threshold function $h$ leads to a (symmetric) threshold policy equilibrium.

By the definition of a (Bayesian Nash) equilibrium, the function $h$ leads to a threshold policy equilibrium if a best response of any agent $i \in [n]$ is characterized by the threshold policy described in \eq{eq-policy}. In other words, we have
\be{eq1}
\begin{split}
1 + \sum_{k \in [n] \setminus \left\{i\right\}} P\Bigl( h(y_k) \geq 0 \mid y_i \Bigr)&\geq 
E( \theta \mid y_i) \\
& = a_{n} x_i + b_{n} z_i
\end{split}
\ee
if and only if $h(y_i) \geq 0$, for any $i \in [n]$, where $z_i$ is given in \eq{eqn:z}. 

\noindent
{\bf Remark.}  
It is shown in Theorem~\ref{thrm:nonexistence} that an equilibrium with a threshold policy on $E( \theta | y_i)$ does not exist. In light of the definition of threshold functions provided here, this result can be stated as the threshold function $h(x,y) = a_n x + b_n y + c$, where $c \in \R$ is a constant, does not lead to a threshold policy equilibrium. We emphasize that our definition of a threshold function only considers a symmetric threshold policy (if it exists), where all agents take actions according to the same threshold function $h$. 

Consider a continuous threshold function $h$ that leads to a Bayesian Nash equilibrium. It can be observed that if $h(y_i) = 0$, then \eq{eq1} also turns into equality. In other words, we have
\be{eq-fxp}
\begin{split}
a_{n} x_i + b_{n} z_i= 1 + \sum_{k \in [n] \setminus \left\{i\right\}} P\Bigl( h(y_k) \geq 0 | y_i\Bigr).
\end{split}
\ee
 Note that given the parameters of the system, the right hand side of \eq{eq-fxp} can be regarded as an operation on the function $h$. The definition of this operation will be provided later in this section. This motivates us to define a \emph{fixed point threshold} function as follows: 
 
\begin{definition}
\label{def0}
We say that $h: \R^n \rightarrow \R$ is a fixed point threshold function if for any choice of $y_i = (x_i, \{y_{ji} \}_{j \in [n] \setminus \{i\}})$, we have $h(y_i) = 0$ if and only if \eq{eq-fxp} also holds.
\end{definition}

We note here that a fixed point threshold function $h(\cdot)$ is not necessarily unique. In fact, the set of transition points $\{r\mid h(r)=0\}$ plays a central role in defining the strategy profile as in \eq{eqn:threshold}, rather than $h$ itself. Any other strictly decreasing function $g$ that has the same set of transition points defines the same strategy profile as $h$ does. 

For notational convenience, we let $$y_{i\setminus k}:=(x_i,\{y_{ji} \}_{j \in [n] \setminus \{i,k\}})$$ for $i,k \in [n]$. Note that if $h: \R^n \rightarrow \R$ is a continuous fixed point threshold function, then for any $k \in [n]\setminus\left\{i\right\}$ and $y_{i \setminus k}$, there exists $y_{ki} \in \R$ such that $h(y_i) = 0$, where $y_i$ is defined in~\eqref{eqn:y_i}. The reason is that the right hand side of \eq{eq-fxp} is bounded between $1$ and $n$, while the left hand side of \eq{eq-fxp} is unbounded in terms of $y_{ki}  \in \R$, while $(x_i, \{y_{ji} \}_{j \in [n] \setminus \{i,k\}})$ is fixed. Therefore, there exists $y_{ki}  \in \R$ such that  \eq{eq-fxp} turns into equality and since $h$ is a fixed point threshold function, by definition, $h(y_i) = 0$. We let $\cI h  (x_i, \{y_{ji} \}_{j \in [n] \setminus \{i,k\}})$ to denote the solution $y_{ki}$ for $h(y_i) = 0$. Furthermore, if $h$ is strictly decreasing with respect to any of its input parameters, $\cI h  (x_i, \{y_{ji} \}_{j \in [n] \setminus \{i,k\}})$ is unique. In fact, $\cI h: \R^{n-1} \rightarrow \R$ is also a strictly decreasing function. 

As mentioned before, the function $h$ is not unique, however, $\cI h$ is unique under some additional conditions. Therefore, instead of explicit characterization of $h$, we will be solving the fixed point equation for $\cI h$ and then, we pick $h$ as follows:
\be{eq-h}
h(y_i) := \cI h(y_{i \setminus k}) - y_{ki}
\ee

As discussed earlier, a symmetry condition on $h$ with respect to $\{y_{ji} \}_{j \in [n] \setminus \{i\}}$ is naturally needed in order to have a symmetric threshold policy equilibrium. In fact all the other agents look the same to the agent $i$ and the observations $\{y_{ji} \}_{j \in [n] \setminus \{i\}}$ follow the same model, as illustrated in Figure\,\ref{fig:multi-agent}, hence the indexing of other agents should not matter. The symmetry condition can be that $h(y_i)$ is the same if $\{y_{ji} \}_{j \in [n] \setminus \{i\}}$ is permuted. However, this is too general for our purpose and $h$ as constructed in \eq{eq-h} may not satisfy this condition. In fact the symmetry condition only matters when $h(y) = 0$, because the threshold policy will be uniquely determined given the set of solutions for $h(y) = 0$ when $h$ is strictly decreasing. The symmetry condition is then defined as follows:
\begin{definition}
\label{def-sym}
We say that $h: \R^n \rightarrow \R$ is a symmetric threshold function if for any root $y = (x_i, \{y_{ji} \}_{j \in [n] \setminus \{i\}})$ of $h$, $y$ with a permuted $\{y_{ji} \}_{j \in [n] \setminus \{i\}}$ is also a root of $h$.
\end{definition}

Note that if $h$ is strictly decreasing and symmetric fixed point function, the function $\cI h: \R^{n-1} \rightarrow \R$ does not depend on the choice of index $k$. 

The following theorem is the main result of this section and presents a sufficient condition on $h$ to be a threshold function. 
\begin{theorem}
\label{thm-main}
Let $h: \R^n \rightarrow \R$ be strictly decreasing and symmetric fixed point threshold function according to Definition\,\ref{def0}. Then $h$ leads to a threshold policy equilibrium. 
\end{theorem}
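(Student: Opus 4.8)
The plan is to reduce the equilibrium condition to a statement about the sign of a single function and then to exploit monotonicity. Define, for each agent $i$,
\begin{align*}
F(y_i) &:= 1 + \sum_{k\in[n]\setminus\{i\}} P\bigl(h(y_k)\ge 0 \mid y_i\bigr) - E[\theta\mid y_i]\\
&= 1 + \sum_{k\in[n]\setminus\{i\}} P\bigl(h(y_k)\ge 0\mid y_i\bigr) - a_n x_i - b_n z_i,
\end{align*}
where the last equality uses Lemma~\ref{lemma:properties} and $z_i$ is as in \eqref{eqn:z}. Since $s_j=1_{h(y_j)\ge 0}$ implies $E[\sum_{j\neq i}s_j(y_j)\mid y_i]=\sum_{j\neq i}P(h(y_j)\ge 0\mid y_i)$, the best-response rule \eqref{eqn:equilibrium} shows that the profile $s=(1_{h(y_1)\ge0},\ldots,1_{h(y_n)\ge0})$ is a Bayesian Nash equilibrium as soon as, for every $i$, $h(y_i)>0\Rightarrow F(y_i)>0$ and $h(y_i)<0\Rightarrow F(y_i)<0$; the boundary case $h(y_i)=0$ needs no attention because \eqref{eqn:equilibrium} permits either action when the indifference equation holds. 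Thus it suffices to prove that $h(y_i)$ and $F(y_i)$ always have the same sign. The hypothesis that $h$ is a fixed point threshold function (Definition~\ref{def0}) already supplies the zero sets: $h(y_i)=0$ iff \eqref{eq-fxp} holds iff $F(y_i)=0$, i.e. $\{h=0\}=\{F=0\}$.

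Next I would show $F$ is strictly decreasing in the coordinate $y_{ki}$, for each fixed $k\neq i$. Applying Lemma~\ref{itm:dist} to each agent $k'\neq i$, write $y_{k'}=\mu_{k'}(y_i)+\varepsilon_{k'}$ where $\varepsilon_{k'}$ is Gaussian and independent of $y_i$, and the entries of the conditional mean $\mu_{k'}(y_i)$ are the affine functions appearing in \eqref{eq:mean} (with the $y_{ik'}$-entry equal to $x_i$). All the Bayesian-update coefficients $a_n,b_n,c_n,d_n,a_{n-1},b_{n-1},\eta_n^2,\gamma_n^2$ are strictly positive, being ratios of strictly positive quantities; inspecting \eqref{eq:mean} then shows that every entry of every $\mu_{k'}(y_i)$ is a non-decreasing function of $y_{ki}$ — not only the obvious $\overline{x_k}$-entry, but also every $\overline{y_{lk'}}$-entry, since $y_{ki}$ enters their defining sums with a positive coefficient. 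Because $h$ is component-wise strictly decreasing, $\{h\ge 0\}$ is a downward-closed subset of $\R^n$; coupling through the common $\varepsilon_{k'}$, this makes $y_i\mapsto P(h(y_{k'})\ge 0\mid y_i)$ non-increasing in $y_{ki}$ for every $k'$. Finally $z_i=y_{ki}+\sum_{j\in[n]\setminus\{i,k\}}y_{ji}$, so $-b_n z_i$ is strictly decreasing in $y_{ki}$ (as $b_n>0$); hence $F$ is strictly decreasing in $y_{ki}$.

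The conclusion is then a one-dimensional sign argument along coordinate lines. Fix $y_i$ and any $k\neq i$ and restrict to the line obtained by varying $y_{ki}$ with $y_{i\setminus k}$ held fixed. Along this line $h$ is continuous, strictly decreasing, and attains $0$ at the unique point $y_{ki}=\cI h(y_{i\setminus k})$ — exactly as noted in the discussion following Definition~\ref{def0} — so $h>0$ for $y_{ki}<\cI h(y_{i\setminus k})$ and $h<0$ beyond it. By the fixed point property $F$ vanishes at that same point, and $F$ is strictly decreasing along the line, so (purely order-theoretically, no continuity of $F$ needed) $F>0$ for $y_{ki}<\cI h(y_{i\setminus k})$ and $F<0$ beyond it. Thus $h$ and $F$ are simultaneously positive, simultaneously zero, and simultaneously negative on the line; since every point of $\R^n$ lies on such a line, $h(y_i)$ and $F(y_i)$ share the same sign for all $y_i$. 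By the reduction of the first paragraph, $s$ is a Bayesian Nash equilibrium, i.e. $h$ leads to a threshold policy equilibrium.

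I expect the main obstacle to be the monotonicity step: one must verify carefully from the formulas \eqref{eq:mean} that perturbing a single shared signal $y_{ki}$ moves the \emph{entire} conditional mean vector of \emph{every} other agent's observation weakly upward, which rests on every Bayesian-combination coefficient being positive — a short but slightly fiddly case check (on whether $k$ equals $k'$, $l$, or neither). A secondary point, worth a remark though not needed for the sign argument, is the measurability and continuity of $y_i\mapsto P(h(y_k)\ge 0\mid y_i)$; this follows from continuity of $h$, the Gaussianity in Lemma~\ref{itm:dist}, and the fact that $\{h=0\}$ is Lebesgue-null (each coordinate line meets it in a single point).
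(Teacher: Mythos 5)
Your proposal is correct and is essentially the paper's own argument: both reduce the claim to showing that the best-response gap has the same sign as $h$, and both establish this by moving along a monotone path in the shared coordinates of $y_i$ to a root of $h$, using the positivity of the coefficients in \eqref{eq:mean} (so that the conditional means $\overline{x_k},\overline{y_{lk}}$ move upward), the component-wise monotonicity of $h$ (equivalently of $\cI h$), and the fixed-point equality \eqref{eq-fxp} at that root. The only cosmetic difference is that you vary a single coordinate $y_{ki}$ while the paper raises all $y_{ji}$, $j\neq i$, simultaneously to reach the root; the mechanism and the required verifications are the same.
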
 

\begin{proof}
For $i \in [n]$, let $y_i=(x_i, \{y_{ji} \}_{j \in [n] \setminus \{i\}})$ denote the observations of agent $i$. We fix $i$ and consider two different cases:

\noindent
Case 1: $h(y_i) \geq 0$.  
\\Note that $\cI h : \R^{n-1} \rightarrow \R$ is a strictly decreasing function and $h(y_k) \geq 0$ if and only if $y_{ik} \leq \cI h  (x_k, \{y_{lk} \}_{l \in [n] \setminus \{i,k\}})$. Using \Lref{itm:dist} we have
\begin{equation}
\begin{split}
\label{lem1-1}
&P\Bigl( h(y_k) \geq 0 \mid y_i\Bigr) \\ 
& = P\Bigl( y_{ik} \leq \cI h  (y_{k \setminus i})  \mid y_i\Bigr),\\
& = P\Bigl(  x_i + \eps_{ik} \leq \cI h(\overline{x_k}+\ep_k,\{\overline{y_{lk}}+\ep_{lk}\}_{l \in [n] \setminus \left\{i,k\right\}}) \Bigr),
\end{split}
\end{equation}
where $\overline{x_k}$ and $\overline{y_{lk}}$ are the means of $x_k$ and $y_{lk}$ conditioned on $y_i$, respectively, and are derived in~\eqref{eq:mean}.
Let $\beps = \bigl(\ep_k, \{\ep_{lk}\}_{l \in [n] \setminus \{i,k\}  } \bigr)$ and $f_{\beps} (\beps)$ denote the joint probability density function (PDF) of the Gaussian random variables $(\ep_k, \{\ep_{lk}\}_{l \in [n] \setminus \{i,k\} })$. By \Lref{itm:dist}, $\eps_{ik}$ is independent of $\beps$ and thus, \eq{lem1-1} can be rewritten as in~\eqref{lem1-2}, shown on top of the next page.
\begin{figure*}[t]
\begin{center}
\line(1,0){400}
\end{center}
\begin{equation}
\label{lem1-2}
P\Bigl( h(y_k) \geq 0 \mid y_i\Bigr)  
= \int_{\R^{n-1}}\hspace{-.05in} \varphi \bigl(\frac{1}{\tau} (\cI h(\overline{x_k}+\ep_k,\{\overline{y_{lk}}+\ep_{lk}\}_{l \in [n] \setminus \left\{i,k\right\}}) - x_i)  \bigr)
f_{\beps}(\beps) d\beps
\end{equation}
\begin{center}
\line(1,0){400}
\end{center}
\end{figure*}

In~\eqref{lem1-2}, $\varphi$ is the cumulative distribution function (CDF) of the normal distribution with unit variance. For $j \in [n]\setminus \left\{i\right\}$, let $y'_{ji} \geq y_{ji} $ such that $h(y'_i) = 0$, where $y'_i=(x_i, \{y'_{ji} \}_{j \in [n] \setminus \{i\}})$. Let also $\overline{x_k}'$ and $\overline{y_{lk}}'$ be defined with respect to $x_i$ and $y'_{ji}$ as in \Lref{itm:dist}. Then by following the same arguments and by noting that $y_{ik} = x_i + \eps_{ik}$ does not change while changing $y_{ji}$ to $y'_{ji}$, ~\eqref{lem1-3} follows, 
\begin{figure*}[t]
\begin{center}
\line(1,0){400}
\end{center}
\begin{equation}
\label{lem1-3}
P\Bigl( h(y_k) \geq 0 \mid y'_i \Bigr) 
= \int_{\R^{n-1}}\hspace{-.05in} \varphi \bigl(\frac{1}{\tau} (\cI h(\overline{x'_k}+\ep_k,\{\overline{y'_{lk}}+\ep_{lk}\}_{l \in [n] \setminus \left\{i,k\right\}}) - x_i)  \bigr)
 f_{\beps}(\beps) d\beps,
\end{equation}
\begin{center}
\line(1,0){400}
\end{center}
\end{figure*}
where $y'_i=(x_i,\{y'_{ji}\}_{j \in [n] \setminus \{i\}})$. Observe that $\overline{x_k} \leq \overline{x_k}'$ and $\overline{y_{lk}} \leq \overline{y_{lk}}'$. Also, note that $\cI h$ is a decreasing function and $\varphi$ is an increasing function. Therefore, \eq{lem1-2} together with \eq{lem1-3} imply that
\be{lem1-4}
\begin{split}
&P\Bigl( h(y_k) \geq 0 \mid y_i\Bigr)  \geq P\Bigl( h(y_k) \geq 0 \mid y'_i\Bigr).
\end{split}
\ee
By summing \eq{lem1-4} over all $k \in [n]\setminus \left\{i\right\}$ and using the fixed point property of $h$ at $(x_i, \{y'_{ji} \}_{j \in [n] \setminus \{i\}})$ as described in Definition\,\ref{def0}, we get
\be{lem1-5}
\begin{split}
& 1 + \sum_{k \in [n] \setminus \left\{i\right\}} P\Bigl( h(y_k) \geq 0 \mid y_i\Bigr) \cr
&\qquad\geq 1 + \sum_{k \in [n] \setminus \left\{i\right\}} P\Bigl( h(y_k) \geq 0 \mid y'_i\Bigr) \\
& \qquad= a_{n} x_i + b_{n} \sum_{j \in [n] \setminus \left\{i\right\}} y'_{ji} \geq a_{n} x_i + b_{n}z_i\\
& \qquad= E( \theta \mid y_i).
\end{split}
\ee
Therefore, the best response of agent $i$ is to take the risky action.
\\
\noindent
Case 2: $h(y_i) < 0$. 
\\In this case we take $y'_{ji} < y_{ji} $ such that $h(y'_i) = 0$. The inequalities in \eq{lem1-4} and \eq{lem1-5} will be reversed and the best response of agent $i$ is to take the safe action. This will complete the proof.
\end{proof}

\subsection{Proof of \Tref{thrm:thm3}}
\label{convergence}
In this section, we analyze the convergence of an iterative scheme for finding the threshold function and consequently the threshold policy equilibrium. To this end, a certain operator $\cT$ is defined that captures the update of a threshold policy when agents update their strategy according to the best response strategy rule.   Sufficient conditions are derived to guarantee that the operator $\cT$ becomes a contraction mapping. Then the Banach fixed point theorem is utilized to show the convergence of an iterative scheme, that applies $\cT$ iteratively to an initial function, to a fixed-point threshold function. Then \Tref{thrm:thm2} is utilized to show that such threshold function results in a Bayesian Nash equilibrium. This complete the proof of \Tref{thrm:thm3}. 

As the first step, we find $\cI h$ such that the threshold function $h$, as derived in \eq{eq-h}, satisfies the conditions of \Tref{thm-main}. By replacing $y_{ki}$ with $\cI h(y_{i \setminus k})$ in \eq{eq-fxp} and using derivation of $P\Bigl( h(y_k) \geq 0 \mid y_i\Bigr)$ in \eq{lem1-2}, \eq{eq-fxp} can be turned into a fixed point equation for the function $\cI h$. Let
\be{def-g}
g(y_{i\setminus k}) := a_n x_i + b_n \cI h(y_{i\setminus k})+b_n \sum_{j \in [n] \setminus \{k,i\}} y_{ji}.
\ee
Note that $g(y_{i\setminus k})$ is simply the left-hand side of \eq{eq-fxp}. We finally arrive at the definition for the fixed point function $g: \R^{n-1} \rightarrow \R$ as follows.  

\begin{definition}
\label{def1}
We call $g: \R^{n-1} \rightarrow \R$ with the input $y \in \R^{n-1}$ to be a fixed point function if 
\be{eq4}
\begin{split}
g(y)= 1+\sum_{l \in [n]\setminus \{i\}} \int_{\R^{n-1}} \varphi\left(\cM_{\beps,l} g(y)\right) f_{\beps} (\beps) d\beps,
\end{split}
\ee
where $f_{\beps} (\beps)$ denote the joint probability density function (PDF) of the Gaussian random variables $(\ep_l, \{\ep_{jl}\}_{j \in [n] \setminus \{i,l\} })$ and $\cM_{\beps,l} g: \R^{n-1}\to\R$ is defined as
\be{defM}
\begin{split}
\cM_{\beps,l} g(y) &= \frac{1}{b_n\tau} \bigl(g(\overline{x_l}+\ep_l,\{\overline{y_{jl}}+\ep_{jl}\}_{j \in [n] \setminus \left\{i,l\right\}})\\
&- a_n (\overline{x_l}+\eps_l)- b_n \sum_{j \in [n] \setminus \{i,l\}} (\overline{y_{jl}}+\ep_{jl}) - b_n x_i\bigr),
\end{split}
\ee
where $\overline{x_l}$ and $\overline{y_{jl}}$ are the means of $x_l$ and $y_{jl}$ derived in~\eqref{eq:mean} in terms of $y_{i\setminus k} = y$ and
$$
y_{ki} = \frac{1}{b_n}g(y_{i\setminus k}) - \frac{a_n}{b_n}x_i - \sum_{j \in [n] \setminus \{k,i\}} y_{ji}.
$$
\end{definition}

\begin{definition}
\label{def-T}
 We define the operator $\cT$ to be the operator that maps a sufficiently well-behaved function $g:\R^{n-1}\to\R$ to $\gd:\R^{n-1}\to\R$ defined by
\be{eq5}
\begin{split}
\gd(y) :=& 1+\sum_{l \in [n]\setminus \{i\}} \int_{\R^{n-1}} \varphi(\cM_{\beps,l} g(y)) f_{\beps} (\beps) d\beps,
\end{split}
\ee
where $\cM_{\beps,l} $ is defined in \eq{defM}. 
\end{definition}

If $g$ is a fixed point function, according to Definition\,\ref{def1}, then $\gd  = g$, by definition of $\gd$ in \eq{eq5}. Therefore, finding a fixed point function $g$ is equivalent to finding a fixed point for the operator $\cT$. 

In the subsequent discussion, we will derive conditions that will characterize the term \textit{sufficiently well-behaved} in the above statement. First we notice that $\cT$ can be viewed as a mapping that maps the space of measurable functions $g: \R^{n-1}\rightarrow [1,n]$ to itself. This follows from the fact that $\varphi(\alpha)\in [0,1]$ for all $\alpha\in \R$. 

To find a fixed point for the operator $\cT$, the structure of the fixed point equation \eqref{eq5} suggests the investigation of the iteration
\begin{align}\label{eqn:iter}
g^{(t+1)}=\gd^{(t)}, 
\end{align}
for some \textit{sufficiently well-behaved} initial function $g^{(0)}$. Indeed, we will prove that $\cT$ induces a contraction mapping on the space $C_0(\R^{n-1},[1,n])$ and hence, converges to a unique fixed point. Throughout our discussion, $C_0(\R^{n-1},[1,n])$ is the space of continuous functions from $\R^{n-1}$ to $[1,n]$ embedded with the uniform norm: 
$$
\| g_1 - g_2 \| := \sup_{y \in \R^{n-1}} |g_1(y) - g_2(y)|.
$$

Once the fixed point function $g$ is found, $\cI h$ is derived from $g$ according to \eq{def-g} and then $h$ is derived from $\cI h$ as in \eq{eq-h}. However, we will need $h$ to be symmetric according to Definition\,\ref{def-sym}. If a function $g:\R^{n-1} \rightarrow \R$ leads to a symmetric $h:\R^n \rightarrow \R$ through the mentioned transformations, then $g$ is called a \emph{quasi-symmetric} function. Note that if $g$ is quasi-symmetric, then $\gd$ is also quasi-symmetric. Because the definition of $\cT$ suggests that it is independent of the labeling of indices $l \in [n] \setminus \{i\}$ and also there is a symmetry between $\cI h(y_{i \setminus k})$ and $y_{ji}$, for $j \in [n] \setminus \{k,i\}$ in \eq{def-g}. Therefore, we limit our attention to the set of quasi-symmetric functions. 

Moreover, in order to make sure that $g$ leads to a strictly decreasing $\cI h$ and consequently a threshold function $h$, we impose a stronger condition on $g$. The Lipschitz continuity is imposed on $g : \R^{n-1} \rightarrow \R$, where $\R^{n-1}$ is embedded with the $L_1$ norm. We show that the Lipschitz continuity with parameter $a_n$ is preserved through $\cT$ under certain conditions, i.e., we require that for any $y_{i\setminus k}, y'_{i\setminus k}  \in \R^{n-1}$:
\be{eq6}
|g(y_{i\setminus k}) - g(y'_{i\setminus k})| \leq a_n|x_i-x'_i| +a_n \sum_{j \in [n]\setminus \{k,i\}}| y_{ji} - y'_{ji} |.
\ee

Let $\cF$ denote the space of all Lipschitz continuous functions $f$ with parameter $a_n$ which are also quasi-symmetric. We aim at characterizing a condition on the noise variance parameters of the system such that the Lipschitz continuity is preserved through the operation $\cT$. In other words $\cT(\cF) \subseteq \cF$, where $\cT(\cF) = \left\{ \cT(g): g \in \cF \right\}$. 

The following lemma establishes a sufficient condition on the parameters of the system to preserve the Lipschitz continuity through the operation $\cT$. The Lipschitz continuity of $g$ is used to show Lipschitz continuity of $\cM_{\beps,l} g$, and consequently the Lipschitz continuity of $\cT g$, where the Lipschitz continuity of $\phi(.)$ together with triangle inequality are used. Furthermore, an upper bound on the Lipschitz constant of $\cM_{\beps,l} g$ and consequently on $\cT g$ are derived in terms of the Lipschitz constant of $g$ and other parameters of the system. That leads to a sufficient condition for not increasing the Lipschitz constant through the operation $\cT$, described in the following lemma. Let 
\be{def-e}
e_n := \max\{c_n a_{n-1}, d_n, c_n b_{n-1}\}.
\ee

\begin{lemma}
\label{lem2}
If 
\be{lem2-1}
\frac{(n-1)\bigl(e_n(na_n+(n-2)b_n)(b_n+2a_n)+b_n^2\bigr)}{a_n b_n^2} \leq \tau,
\ee
then Lipschitz continuity with parameter $a_n$ is preserved through the operation $\cT$.
\end{lemma}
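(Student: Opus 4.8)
The plan is to show that for any $g \in \cF$ and any two input points $y = y_{i\setminus k}$ and $y' = y'_{i\setminus k}$ in $\R^{n-1}$, the bound $|\cT g(y) - \cT g(y')| \le a_n\bigl(|x_i - x'_i| + \sum_{j} |y_{ji} - y'_{ji}|\bigr)$ holds under hypothesis~\eqref{lem2-1}. Quasi-symmetry is clearly preserved (as already noted after Definition~\ref{def-T}), so the whole content is the Lipschitz estimate. First I would write, from the definition of $\cT$ in~\eqref{eq5},
\[
\cT g(y) - \cT g(y') = \sum_{l \in [n]\setminus\{i\}} \int_{\R^{n-1}} \bigl(\varphi(\cM_{\beps,l} g(y)) - \varphi(\cM_{\beps,l} g(y'))\bigr) f_{\beps}(\beps)\, d\beps,
\]
and use that $\varphi$ is $1/\sqrt{2\pi}$-Lipschitz — in fact I will just use that it is $1$-Lipschitz, which suffices — to bound this by $\sum_l \int |\cM_{\beps,l} g(y) - \cM_{\beps,l} g(y')| f_{\beps}(\beps)\, d\beps$. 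So the task reduces to bounding $|\cM_{\beps,l} g(y) - \cM_{\beps,l} g(y')|$ pointwise in $\beps$, for each fixed $l$.

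Next I would unpack $\cM_{\beps,l}$ from~\eqref{defM}. It has the prefactor $1/(b_n\tau)$ multiplying a difference of the form $g(\text{shifted means}) - a_n(\overline{x_l}+\eps_l) - b_n\sum_{j}(\overline{y_{jl}}+\eps_{jl}) - b_n x_i$. The $\eps_l, \eps_{jl}$ terms cancel in the difference between the $y$ and $y'$ evaluations, so I only need to track how the conditional means $\overline{x_l}, \overline{y_{jl}}$ and the explicit $x_i$ and $y_{ki} = \frac{1}{b_n}g(y) - \frac{a_n}{b_n}x_i - \sum_j y_{ji}$ depend on $y$. Here the key inputs are: (i) from~\eqref{eq:mean}, the means $\overline{x_l}, \overline{y_{jl}}$ are affine in $(x_i, \{y_{ji}\}_{j\neq i})$ with coefficients $c_n a_{n-1}, d_n, c_n b_{n-1}$, all bounded by $e_n$ from~\eqref{def-e}; (ii) $g$ itself is Lipschitz with parameter $a_n$ in the $L_1$ norm (the hypothesis $g\in\cF$); and (iii) $y_{ki}$ — which enters the means through the $d_n y_{ki}$ slot or the $c_n b_{n-1}$ sum, depending on whether $j=k$ — is itself an affine function of $y$ involving $g(y)$, hence Lipschitz with parameter roughly $\frac{1}{b_n}a_n + \frac{a_n}{b_n} + 1$ (i.e.\ $\frac{b_n + 2a_n}{b_n}$) in the relevant sense. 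Composing: each of the $n-1$ coordinates of the argument of $g(\cdot)$ moves by at most $e_n(na_n + (n-2)b_n) \cdot \frac{b_n+2a_n}{b_n}\cdot(\cdots)$ times $\|y - y'\|_1$, roughly; chaining the $a_n$-Lipschitz bound on $g$, the $a_n$ and $b_n$ coefficients on the explicit linear terms, and the $b_n x_i$ term, and then dividing by $b_n\tau$ and summing over the $n-1$ values of $l$, produces a bound of the shape $\frac{(n-1)\bigl(e_n(na_n+(n-2)b_n)(b_n+2a_n)+b_n^2\bigr)}{a_n b_n^2 \tau}\cdot a_n\|y-y'\|_1$. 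Requiring the fraction to be $\le 1$, which is exactly~\eqref{lem2-1}, gives the $a_n$-Lipschitz conclusion.

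The main obstacle — and where I would be most careful — is the bookkeeping of how $y_{ki}$ feeds back into the conditional means, because $y_{ki}$ is defined implicitly through $g(y)$, so a perturbation of $y$ affects the means of $y_k$'s observation vector \emph{both directly} (through the $x_i$ and $y_{ji}$ slots in~\eqref{eq:mean}) \emph{and indirectly} (through $y_{ki}$, hence through $g$, hence with an extra factor of $a_n/b_n$ roughly). Getting the combinatorial counting right — which index $l$ contributes which coefficient, how many $\overline{y_{jl}}$ terms there are ($n-2$ of them), and making sure the $\sum_{j \in [n]\setminus\{k,i\}}$ in~\eqref{defM} is handled consistently with the $\sum_{j\in[n]\setminus\{i,l\}}$ in~\eqref{eq:mean} when $l \neq k$ — is the delicate part; everything else is Lipschitz-composition arithmetic. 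I would organize the estimate by first bounding $\bigl|\,(\text{argument of }g)(y) - (\text{argument of }g)(y')\,\bigr|_1$ in terms of $\|y-y'\|_1$ (this is where $e_n$, $n$, $a_n$, $b_n$, and the $y_{ki}$ feedback all enter), then apply the $a_n$-Lipschitz property of $g$, then add the contributions of the explicit $-a_n(\overline{x_l}+\eps_l) - b_n\sum(\overline{y_{jl}}+\eps_{jl}) - b_n x_i$ terms (again linear in $y$, contributing the $b_n^2$ and part of the other terms), divide by $b_n\tau$, and finally sum over $l \in [n]\setminus\{i\}$ to pick up the factor $n-1$.
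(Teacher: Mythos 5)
Your proposal follows essentially the same route as the paper's proof: reduce to a pointwise bound on $|\cM_{\beps,l} g(y) - \cM_{\beps,l} g(y')|$ via the $1$-Lipschitz property of $\varphi$, bound the perturbation of the conditional means by $e_n$ times the direct change plus the indirect change through $y_{ki}$ (which carries the factor $\frac{b_n+2a_n}{b_n}$ from the feedback of $g$), apply the $a_n$-Lipschitz hypothesis on $g$, and sum over the $n-1$ values of $l$ to arrive at exactly the condition \eqref{lem2-1}. The bookkeeping you flag as delicate is carried out in the paper just as you describe, so the argument is correct and matches.
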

\begin{proof}
Let $y, y' \in \R^{n-1}$. Then by definition of operation $\cT$,
\begin{align}
&\gd(y) - \gd(y') \nonumber\\
&= \sum_{l \in [n]\setminus \{i\}} \int_{\R^{n-1}} \bigl(\varphi(\cM_{\beps,l} g(y)) - \varphi(\cM_{\beps,l} g(y'))\bigr)  f_{\beps} ({\beps}) d \beps.
\end{align}

Observe that $|\varphi(A) - \varphi(B)| \leq |A-B|$ for any $A,B \in \R$. Therefore,
\begin{equation}
\label{lem2-2}
\begin{split}
&|\gd(y) - \gd(y')| \\
&\leq  \sum_{l \in [n]\setminus \{i\}} \int_{\R^{n-1}} |\varphi(\cM_{\beps,l} g(y)) - \varphi(\cM_{\beps,l} g(y'))|  f_{\beps} ({\beps}) d \beps \\
& \leq   \sum_{l \in [n]\setminus \{i\}} \int_{\R^{n-1}} |\cM_{\beps,l} g(y) - \cM_{\beps,l} g(y')|  f_{\beps} ({\beps}) d \beps.
\end{split}
\end{equation}

Let $\delta = \|y-y'\|_1$. In the following series of inequalities, we only use the Lipschitz continuity of $g$ as given in \eq{eq6} and triangle inequality. For any $l \in [n]\setminus\{i\}$ and $\beps \in \R^{n-1}$,
\be{lem2-3}
\begin{split}
& b_n\tau \bigl |\cM_{\beps,l} g(y) - \cM_{\beps,l} g(y')\bigr | \\
& \leq |g(\overline{x_l}+\ep_l,\{\overline{y_{jl}}+\ep_{jl}\}_{j \in [n] \setminus \left\{i,l\right\}}) \\
&- g(\overline{x_l}'+\ep_l,\{\overline{y_{jl}}'+\ep_{jl}\}_{j \in [n] \setminus \left\{i,l\right\}})| + a_n |\overline{x_l} - \overline{x_l}'| \\
& + b_n\sum_{j \in \setminus [n] \{i,l\}} |\overline{y_{jl}} - \overline{y_{jl}}'| +b_n \delta \\
& \leq 2a_n |\overline{x_l} - \overline{x_l}'| + (a_n+b_n)\sum_{j \in [n] \setminus \{i,l\}} |\overline{y_{jl}} - \overline{y_{jl}}'| +b_n \delta \\
\end{split}
\ee
where $\overline{x_l}$ and $\overline{y_{jl}}$ are derived in~\eqref{eq:mean} with $y_{i \setminus k} = y$ and $y_{ki} = \frac{1}{b_n}g(y_{i\setminus k}) - \frac{a_n}{b_n}x_i - \sum_{j \in [n] \setminus \{k,i\}} y_{ji}$, according to Definition\,\ref{def1}. Also, $\overline{x_l}'$ and $\overline{y_{jl}}'$ are derived similarly. Then we have
\be{lem2-4}
\begin{split}
|\overline{x_l} - \overline{x_l}'|
&\leq e_n \delta + e_n |y_{ki} - y'_{ki}|\\
&\leq e_n(1+\frac{a_n}{b_n}) \delta +  \frac{e_n}{b_n}|g(y_{i\setminus k}) - g(y'_{i\setminus k})| \\
&\leq e_n(1+\frac{2a_n}{b_n}) \delta,
\end{split}
\ee
where in the last inequality we used the Lipschitz continuity of $g$. Similarly, we have
\be{lem2-5}
|\overline{y_{jl}} - \overline{y_{jl}}'| \leq e_n(1+\frac{2a_n}{b_n}) \delta.
\ee
By combining \eq{lem2-3}, \eq{lem2-4} and \eq{lem2-5} we have
\be{lem2-6}
\begin{split}
&b_n\tau \bigl |\cM_{\beps,l} g(y_{i\setminus k}) - \cM_{\beps,l} g(y'_{i\setminus k})\bigr |\\
&\leq\frac{1}{b_n}\bigl(e_n(na_n+(n-2)b_n)(b_n+2a_n)+b_n^2\bigr)\delta.
\end{split}
\ee
This together with \eq{lem2-2} imply that
\begin{align*}
&|\gd(y_{i\setminus k}) - \gd(y'_{i\setminus k})|\\
& \leq \frac{(n-1)\bigl(e_n(na_n+(n-2)b_n)(b_n+2a_n)+b_n^2\bigr)}{b_n^2 \tau} \delta.
\end{align*}
Therefore, a sufficient condition on $\gd$ for being Lipschitz continuous with parameter $a_n$ is that
$$
\frac{(n-1)\bigl(e_n(na_n+(n-2)b_n)(b_n+2a_n)+b_n^2\bigr)}{b_n^2 \tau} \leq a_n.
$$
which completes the proof. 
\end{proof}

Suppose that the parameters of the system satisfy \eqref{lem2-1}. First notice that since $\varphi(\alpha)\in[0,1]$, it follows that $\gd(y)\in[1,n]$ for all $y\in\R^{n-1}$. Therefore, one can view $\cT$ as a mapping $\cT:\cF\cap C_0(\R^{n-1},[0,n])\to \cF\cap C_0(\R^{n-1},[1,n])$. The following lemma builds upon \Lref{lem2} and exploits the Lipschitz continuity of $g$, which is preserved through the operation $\cT$, to show that $\cT$ is indeed a contraction mapping for a certain set of parameters.  
\begin{lemma}
\label{lem3}
For any two functions $g,h \in \cT(\cF)$,
$$
\| \cT g_1 - \cT g_2\| \leq \frac{(n-1)(e_n(na_n+(n-2)b_n)+ b_n)}{b_n^2 \tau}\|g_1-g_2\|. 
$$
\end{lemma}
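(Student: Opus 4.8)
The plan is to follow the scheme of the proof of \Lref{lem2}, the difference being that $\cT$ is now applied to two functions $g_1,g_2$ at the \emph{same} argument $y=y_{i\setminus k}\in\R^{n-1}$. Starting from \eqref{eq5} and using that the unit-variance normal CDF $\varphi$ satisfies $|\varphi(A)-\varphi(B)|\le|A-B|$,
\[
|\cT g_1(y)-\cT g_2(y)|\le\sum_{l\in[n]\setminus\{i\}}\int_{\R^{n-1}}\bigl|\cM_{\beps,l}g_1(y)-\cM_{\beps,l}g_2(y)\bigr|\,f_{\beps}(\beps)\,d\beps ,
\]
so since $f_{\beps}$ integrates to $1$ it suffices to bound $|\cM_{\beps,l}g_1(y)-\cM_{\beps,l}g_2(y)|$ uniformly in $\beps$ and $l$ and then sum the $n-1$ terms and take $\sup_y$.

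The key observation is that, with $y$ held fixed, the function $g$ influences $\cM_{\beps,l}g(y)$ only through two channels. First, through the substituted coordinate $y_{ki}=\frac{1}{b_n}g(y_{i\setminus k})-\frac{a_n}{b_n}x_i-\sum_{j\in[n]\setminus\{k,i\}}y_{ji}$ of Definition~\ref{def1}, which satisfies $|y_{ki}^{(1)}-y_{ki}^{(2)}|\le\frac{1}{b_n}\|g_1-g_2\|$ at once; since in \eqref{eq:mean} the coordinate $y_{ki}$ enters each of the $n-1$ conditional means $\overline{x_l},\{\overline{y_{jl}}\}_{j\in[n]\setminus\{i,l\}}$ exactly once and with a coefficient that is one of $c_na_{n-1},d_n,c_nb_{n-1}$, each of these means changes by at most $e_n|y_{ki}^{(1)}-y_{ki}^{(2)}|\le\frac{e_n}{b_n}\|g_1-g_2\|$ (recall \eqref{def-e}). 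Second, through the outer evaluation of $g$ on the shifted means; inserting the intermediate quantity ``$g_1$ evaluated at the means computed from $g_2$'' and applying the triangle inequality bounds this channel's contribution by $a_n\bigl(|\Delta\overline{x_l}|+\sum_j|\Delta\overline{y_{jl}}|\bigr)+\|g_1-g_2\|$, where the first part uses that $g_1\in\cF$ is Lipschitz with parameter $a_n$ in the $L_1$ norm — valid because $g_1,g_2\in\cT(\cF)\subseteq\cF$ by \Lref{lem2} — and the second is the definition of $\|\cdot\|$.

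Collecting, with one mean $\overline{x_l}$ and $n-2$ means $\overline{y_{jl}}$ per $l$, the $\overline{x_l}$ term picks up total coefficient $2a_n$ (once from the linear part $-a_n\overline{x_l}$ of $\cM_{\beps,l}$, once from the Lipschitz bound on $g_1$) and each $\overline{y_{jl}}$ total coefficient $a_n+b_n$, plus the genuinely new term $\|g_1-g_2\|$, so that $b_n\tau\,|\cM_{\beps,l}g_1(y)-\cM_{\beps,l}g_2(y)|\le\bigl(\frac{e_n}{b_n}(2a_n+(n-2)(a_n+b_n))+1\bigr)\|g_1-g_2\|$. Since $2a_n+(n-2)(a_n+b_n)=na_n+(n-2)b_n$, dividing by $b_n\tau$, integrating against $f_{\beps}$, summing over the $n-1$ values of $l$, and taking the supremum over $y$ gives precisely the asserted bound. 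The hard part is the bookkeeping in the first channel: one must check that $y_{ki}$ really does enter each of the $n-1$ means of \eqref{eq:mean} exactly once with coefficient at most $e_n$, separating the case $l=k$ (where $y_{ki}$ is the $d_n$-term of $\overline{x_l}$ and lies inside the summation defining each $\overline{y_{jl}}$) from $l\ne k$ (where $y_{ki}$ lies inside the summation defining $\overline{x_l}$ and is the $d_n$-term of the single mean $\overline{y_{kl}}$) — the same combinatorial accounting as in \Lref{lem2}, with the role played there by $\delta=\|y-y'\|_1$ now played by $\frac{1}{b_n}\|g_1-g_2\|$.
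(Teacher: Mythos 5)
Your proposal is correct and follows essentially the same route as the paper's proof: the bound $|\varphi(A)-\varphi(B)|\le|A-B|$, the triangle-inequality split of $b_n\tau\,|\cM_{\beps,l}g_1(y)-\cM_{\beps,l}g_2(y)|$ into the outer $g$-evaluation term (handled via the Lipschitz property of $g_1\in\cF$ plus $\|g_1-g_2\|$) and the linear terms in the conditional means, and the estimate $|\Delta\overline{x_l}|,|\Delta\overline{y_{jl}}|\le\frac{e_n}{b_n}\|g_1-g_2\|$ coming from the substituted coordinate $y_{ki}$, yielding the identical coefficient $\frac{e_n}{b_n}(na_n+(n-2)b_n)+1$ per term. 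Your explicit accounting of how $y_{ki}$ enters each of the $n-1$ means exactly once with coefficient at most $e_n$ is a slightly more careful justification of the step the paper states directly as \eqref{lem3-5}--\eqref{lem3-6}, but it is the same argument.
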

\begin{proof}
For any $y \in \R^{n-1}$, 
\be{lem3-1}
\begin{split}
&\cT g_1(y) - \cT g_2(y) \\
& =  \sum_{l \in [n]\setminus \{i\}} \int_{\R^{n-1}} \bigl(\varphi(\cM_{\beps,l} g_1(y) - \varphi(\cM_{\beps,l} g_2(y))\bigr)  f_{\beps} ({\beps}) d \beps\\
&\leq \sum_{l \in [n]\setminus \{i\}} \int_{\R^{n-1}} |\cM_{\beps,l} g_1(y)) - \cM_{\beps,l} g_2(y)|  f_{\beps} ({\beps}) d \beps\\
&\leq  \sum_{l \in [n]\setminus \{i\}}  \sup_{\beps \in \R^{n-1}}|\cM_{\beps,l} g_1(y)) - \cM_{\beps,l} g_2(y')|.
\end{split}
\ee
For any $\beps \in \R^{n-1}$ and $l \in [n]\setminus \{i\}$, by definition of $\cM_{\beps,l} g_1$ and $\cM_{\beps,l} g_2$ in \eq{defM} and triangle inequality we have
\be{lem3-2}
\begin{split}
&b_n\tau |\cM_{\beps,l} g_1(y_{i\setminus k}) - (\cM_{\beps,l} g_2(y_{i\setminus k})|\\
\leq& |g_1(\overline{x^1_l}+\ep_l,\{\overline{y_{jl}^1}+\ep_{jl}\}_{j \in [n] \setminus \left\{i,l\right\}}) \\
&- g_2(\overline{x^2_l}+\ep_l,\{\overline{y^2_{jl}}+\ep_{jl}\}_{j \in [n] \setminus \left\{i,l\right\}})|\\
& + a_n|\overline{x^1_l}-\overline{x^2_l}|+ b_n \sum_{j \in [n] \setminus \{i,l\}} |\overline{y^1_{jl}}-\overline{y^2_{jl}}|,
\end{split}
\ee
where $\overline{x^m_l}$ and $\overline{y^m_{jl}}$, for $m=1,2$, are derived in~\eqref{eq:mean} with $y_{i\setminus k} = y$ and $y_{ki} = \frac{1}{b_n}g_m(y_{i\setminus k}) - \frac{a_n}{b_n}x_i - \sum_{j \in [n] \setminus \{k,i\}} y_{ji}$, according to Definition\,\ref{def1}. Note that, for $y,y' \in \R^{n-1}$, 
\be{lem3-3}
\begin{split}
|g_1(y) - g_2(y')| &\leq |g_1(y) - g_1(y')| + |g_1(y')-g_2(y')|\\
 &\leq a_n\|y-y'\|_1 + \|g_1 - g_2\|, 
\end{split}
\ee
where we used \Lref{lem2} for Lipschitz continuity of $g_1$ with parameter $a_n$ and the triangle inequality. Using \eq{lem3-3} in \eq{lem3-2} we get
\be{lem3-4}
\begin{split}
&b_n\tau |\cM_{\beps,l} g_1(y_{i\setminus k}) - (\cM_{\beps,l} g_2(y_{i\setminus k})|\\
&\leq 2a_n|\overline{x^1_l}-\overline{x^2_l}|+ (a_n+b_n)\hspace{-2mm} \sum_{j \in [n] \setminus \{i,l\}}\hspace{-2mm} |\overline{y^1_{jl}}-\overline{y^2_{jl}}| + \|g_1 - g_2\|.
\end{split}
\ee
Also,
\be{lem3-5}
|\overline{x^1_l} - \overline{x^2_l}| \leq \frac{e_n}{b_n} |g_1(y_{i\setminus k}) - g_2(y_{i\setminus k})| \leq \frac{e_n}{b_n}\|g_1-g_2\|,
\ee
and similarly,
\be{lem3-6}
|\overline{y^2_{jl}} - \overline{y^2_{jl}}| \leq \frac{e_n}{b_n}\|g_1-g_2\|.
\ee
Therefore, \eq{lem3-4}, \eq{lem3-5} and \eq{lem3-6} together imply that
\be{lem3-7}
\begin{split}
&b_n\tau |\cM_{\beps,l} g_1(y_{i\setminus k}) - (\cM_{\beps,l} g_2(y_{i\setminus k})| \nonumber\\
&\leq (\frac{na_n e_n}{b_n} +(n-2)e_n+ 1)\|g_1 - g_2\|.
\end{split}
\ee
This together with \eq{lem3-1} imply that
\begin{align*}
&\cT g_1(y) - \cT g_2(y)\\
& \leq \frac{(n-1)(e_n(na_n+(n-2)b_n)+ b_n)}{b_n^2 \tau} \|g_1 - g_2\|,
\end{align*}
which completes the proof of lemma.
\end{proof}

Let
\be{def-w}
\begin{split}
w_n = \max \big\{&\frac{(n-1)(e_n(na_n+(n-2)b_n)+ b_n)}{b_n^2}, \\
&\frac{(n-1)\bigl(e_n(na_n+(n-2)b_n)(b_n+2a_n)+b_n^2\bigr)}{a_n b_n^2} \big\}
\end{split}
\ee
If $w_n < \tau$, then the conditions of \Lref{lem2} and \Lref{lem3} are both satisfied. Hence, if $g$ is Lipschitz continuous with parameter $a_n$, then so is $\cT g$ by \Lref{lem2}. Then by \Lref{lem3}, we have
$$
\| \cT g_1 - \cT g_2\| \leq \frac{w_n}{\tau} \|g_1-g_2\|. 
$$
In other words, if $w_n < \tau$, then $\cT$ is a contraction mapping over $\cF$, with parameter $w_n/\tau < 1$. This leads to the following theorem. 
\begin{theorem}
\label{thm1}
Let $w_n < \tau$. Then the sequence of functions $\{g^{(t)}\}$ defined by \eqref{eqn:iter}, where $g^{(0)} \in \cF$, converges to a unique fixed point $g=\cT g$ in the space $\cF$ embedded with the $\mathcal{L}_\infty$ norm. 
\end{theorem}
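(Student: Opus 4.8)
The plan is to recognize Theorem~\ref{thm1} as a direct application of the Banach fixed point theorem to the operator $\cT$ acting on the metric space $(\cF, \|\cdot\|)$, where $\|\cdot\|$ is the uniform norm. The work has already been done in \Lref{lem2} and \Lref{lem3}: the first guarantees $\cT(\cF)\subseteq\cF$ (so that iterating $\cT$ from $g^{(0)}\in\cF$ stays in $\cF$), and the second gives the Lipschitz estimate on $\cT$. So the proof is mostly a matter of checking the hypotheses of Banach's theorem carefully. First I would verify that $\cF$, equipped with the uniform norm, is a \emph{complete} metric space. This is the step I expect to require the most care, since $\cF$ is a proper subset of $C_0(\R^{n-1},[1,n])$: I must argue that a uniform limit of functions that are (i) $[1,n]$-valued, (ii) Lipschitz with parameter $a_n$ in the $L_1$ norm, and (iii) quasi-symmetric, again has all three properties. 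Boundedness in $[1,n]$ and the Lipschitz bound pass to uniform limits by elementary $\varepsilon$-arguments, and quasi-symmetry is a closed condition because it is expressed through linear relations on the roots of the associated $h$ (equivalently, symmetry of $\cI h$ in its arguments, a property preserved under uniform convergence). Thus $\cF$ is a closed subset of the complete space $C_0(\R^{n-1},[1,n])$ and hence complete.

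Next I would assemble the contraction constant. By definition of $w_n$ in \eq{def-w}, the hypothesis $w_n<\tau$ forces both
\[
\frac{(n-1)(e_n(na_n+(n-2)b_n)+b_n)}{b_n^2}<\tau
\quad\text{and}\quad
\frac{(n-1)\bigl(e_n(na_n+(n-2)b_n)(b_n+2a_n)+b_n^2\bigr)}{a_nb_n^2}<\tau.
\]
The second inequality is exactly \eq{lem2-1}, so \Lref{lem2} applies and $\cT(\cF)\subseteq\cF$. The first inequality says precisely that the Lipschitz constant in \Lref{lem3},
\[
L:=\frac{(n-1)(e_n(na_n+(n-2)b_n)+b_n)}{b_n^2\tau},
\]
satisfies $L<1$. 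Hence $\cT$ restricted to $\cF$ is a contraction with modulus $L<1$ with respect to the uniform norm.

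With completeness of $\cF$ and the contraction property in hand, the Banach fixed point theorem yields a unique $g\in\cF$ with $\cT g=g$, and moreover the Picard iteration $g^{(t+1)}=\gd^{(t)}=\cT g^{(t)}$ converges to $g$ in the uniform norm from any starting point $g^{(0)}\in\cF$, with the standard geometric rate $\|g^{(t)}-g\|\leq L^t\|g^{(0)}-g\|/(1-L)$ (or $\leq L^t(n-1)/(1-L)$ using the a~priori diameter bound $\|g^{(0)}-g\|\leq n-1$ on $[1,n]$-valued functions). This is the statement of Theorem~\ref{thm1}. The only genuinely non-routine point, as noted, is the closedness/completeness of $\cF$; everything else is bookkeeping on the constants already computed in the two preceding lemmas. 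One subtlety worth a sentence in the write-up: \Lref{lem3} is stated for $g_1,g_2\in\cT(\cF)$ rather than all of $\cF$, but since $\cT(\cF)\subseteq\cF$ the iterates $g^{(t)}$ for $t\geq1$ lie in $\cT(\cF)$, so the contraction estimate applies along the whole tail of the sequence, which is all that is needed to run the Cauchy argument and identify the limit as the fixed point.
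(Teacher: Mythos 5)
Your proposal follows exactly the same route as the paper's own (one-line) proof: invoke Lemma~\ref{lem2} for invariance $\cT(\cF)\subseteq\cF$, Lemma~\ref{lem3} for the contraction modulus $L<1$ under $w_n<\tau$, and conclude by the Banach fixed point theorem on $\cF$ with the uniform norm. Your write-up is in fact more careful than the paper's, since you explicitly verify completeness (closedness of $\cF$ in $C_0(\R^{n-1},[1,n])$) and address the minor mismatch that Lemma~\ref{lem3} is stated for functions in $\cT(\cF)$, both of which the paper leaves implicit.
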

\begin{proof}
Let $w_n < \tau$. Then by \Lref{lem3}, $\cT$ would be a contraction mapping over $\cF$ with $\mathcal{L}_\infty$ norm (which is a complete space) and hence, the result follows immediately by the Banach Fixed Point Theorem \cite{khalil2002nonlinear}. 
\end{proof}
The following theorem is the main result of this section. 
\begin{theorem}
\label{thm2}
If $w_n < \tau$, then there exists a continuous threshold function $h :\R^n \rightarrow \R$ that leads to a threshold policy equilibrium. Furthermore, $h(y)$ can be numerically approximated with arbitrarily enough precision, with respect to $\mathcal{L}_{\infty}$ norm.
\end{theorem}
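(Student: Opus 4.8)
\textbf{Proof plan for Theorem~\ref{thm2}.}
The plan is to simply assemble the pieces already established. Under the hypothesis $w_n < \tau$, both conditions \eqref{lem2-1} and the contraction estimate of \Lref{lem3} hold simultaneously, so by \Tref{thm1} the operator $\cT$ is a contraction on the complete metric space $\cF$ (Lipschitz-continuous, quasi-symmetric functions $\R^{n-1}\to[1,n]$ with the $\mathcal{L}_\infty$ norm), and hence admits a unique fixed point $g = \cT g$. First I would record that this $g$ is, by Definition~\ref{def1} (since $\gd = g$ is exactly equation \eqref{eq4}), a fixed point function. Second, from $g$ I would recover $\cI h$ by solving \eqref{def-g} for $\cI h(y_{i\setminus k})$, namely $\cI h(y_{i\setminus k}) = \tfrac{1}{b_n}\bigl(g(y_{i\setminus k}) - a_n x_i - b_n\sum_{j\in[n]\setminus\{k,i\}}y_{ji}\bigr)$, and then define $h$ via \eqref{eq-h}, $h(y_i) := \cI h(y_{i\setminus k}) - y_{ki}$.

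Next I would verify that the $h$ so constructed meets all the hypotheses of \Tref{thm-main}. Continuity of $h$ is inherited from continuity of $g$ (a Lipschitz function). Strict monotonicity in each coordinate: in $y_{ki}$ this is immediate from \eqref{eq-h} since $h$ is affine with slope $-1$ in $y_{ki}$; in $x_i$ and in $y_{ji}$ ($j\neq k,i$) it follows because the Lipschitz constant of $g$ is $a_n$ while the coefficients subtracted in forming $\cI h$ are $a_n/b_n$ and $1$ respectively, scaled up by the factor $1/b_n$ — so one checks that $e_n$, $a_n$, $b_n$ are such that the net dependence is strictly decreasing. Here I would lean on the fact that $a_n > b_n$ (noted in the proof of \Tref{thrm:nonexistence}) together with the Lipschitz bound \eqref{eq6}, which forces $|\cI h(y) - \cI h(y')| \le \|y-y'\|_1$, i.e. $\cI h$ is $1$-Lipschitz; combined with the affine subtraction this yields strict decrease of $h$ in every argument. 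Symmetry of $h$ in the sense of Definition~\ref{def-sym} follows from $g$ being quasi-symmetric, which is preserved by $\cT$ (as already argued in the text) and hence holds for the fixed point. That $h$ is a \emph{fixed point threshold function} in the sense of Definition~\ref{def0}, i.e. $h(y_i)=0 \iff \eqref{eq-fxp}$, is exactly the content of $g$ solving \eqref{eq4} translated back through \eqref{def-g} and \eqref{eq-h}: $h(y_i)=0$ means $y_{ki}=\cI h(y_{i\setminus k})$, which by construction of $g$ as a fixed point makes both sides of \eqref{eq-fxp} equal. With all hypotheses of \Tref{thm-main} verified, that theorem gives directly that $h$ leads to a threshold policy equilibrium.

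Finally, the approximation claim: since $\cT$ is a contraction with modulus $q := w_n/\tau < 1$ on the complete space $\cF$, the Banach fixed point theorem furnishes the a~priori estimate $\|g^{(t)} - g\| \le \frac{q^t}{1-q}\|g^{(1)}-g^{(0)}\|$ for the iterates $g^{(t+1)} = \gd^{(t)}$ started from any $g^{(0)}\in\cF$; choosing $t$ large makes $\|g^{(t)}-g\|$ arbitrarily small. Because the maps $g \mapsto \cI h \mapsto h$ are (affine, hence) Lipschitz in the $\mathcal{L}_\infty$ norm with an explicit constant depending only on $a_n,b_n$ and $n$, an $\mathcal{L}_\infty$-approximation of $g$ yields an $\mathcal{L}_\infty$-approximation of $h$ to any desired precision. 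I would close by noting \Tref{thrm:thm3} then follows by showing $w_n < \tau$ holds on an unbounded parameter set — e.g. fixing $r = \tau/\sigma$ and letting $\sigma\to\infty$, one checks $a_n, b_n$ scale so that $w_n$ stays bounded while $\tau\to\infty$.

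The step I expect to be the main obstacle is the rigorous verification that the constructed $h$ is strictly decreasing in \emph{each} coordinate $x_i$ and $y_{ji}$ — the $y_{ki}$ coordinate is trivial, but for the others one must track how the $1/b_n$ scaling in $\cI h$ interacts with the $a_n$-Lipschitz bound on $g$ and the coefficients $a_n/b_n$, $1$ that get subtracted, and confirm the inequalities force strict (not merely weak) monotonicity; this is where the precise relations among $a_n, b_n, e_n$ and the hypothesis $w_n<\tau$ must be used carefully rather than invoked in passing.
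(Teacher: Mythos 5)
Your overall architecture is exactly the paper's: under $w_n<\tau$ the hypotheses of \Lref{lem2} and \Lref{lem3} hold simultaneously, \Tref{thm1} (Banach) gives a unique fixed point $g=\cT g$ in $\cF$, one passes $g\mapsto\cI h\mapsto h$ via \eq{def-g} and \eq{eq-h}, and \Tref{thm-main} delivers the threshold policy equilibrium; the approximation claim is the standard a priori Banach estimate applied to the iterates \eqref{eqn:iter}, transported to $h$ by the affine map $g\mapsto h$. The paper's own proof is precisely this assembly, plus the observation that $\cF\cap C_0(\R^{n-1},[1,n])$ is closed so the limit is (Lipschitz) continuous.

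The one place where you go beyond the paper is also where your argument breaks: the claim that \eq{eq6} makes $\cI h$ $1$-Lipschitz and hence forces $h$ to be strictly decreasing in every coordinate. From $\cI h(y_{i\setminus k})=\tfrac{1}{b_n}g(y_{i\setminus k})-\tfrac{a_n}{b_n}x_i-\sum_{j\in[n]\setminus\{i,k\}}y_{ji}$ and the $a_n$-Lipschitz bound on $g$, an increase of $x_i$ by $\delta>0$ changes $\cI h$ by at most $\tfrac{a_n}{b_n}\delta-\tfrac{a_n}{b_n}\delta=0$, which is only \emph{weak} monotonicity; and an increase of $y_{ji}$ by $\delta>0$ changes $\cI h$ by as much as $\tfrac{a_n}{b_n}\delta-\delta=\tfrac{a_n-b_n}{b_n}\delta$, which is strictly \emph{positive} since $a_n>b_n$. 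So membership in $\cF$ as defined does not force $\cI h$ (hence $h$) to be decreasing in the $y_{ji}$ coordinates at all, let alone strictly, and the monotonicity hypothesis of \Tref{thm-main} is not verified by the argument you sketch. Your instinct that this is the main obstacle is correct, but the proposed fix does not close it; one would need a coordinate-dependent Lipschitz condition on $g$ (rate strictly below $a_n$ in $x_i$ and strictly below $b_n$ in each $y_{ji}$) that is itself shown to be preserved by $\cT$, or a separate monotonicity argument for the fixed point. To be fair, the published proof is silent on this same point, so your proposal is no less complete than the paper's — but as written the step is a genuine gap, not a routine verification.
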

\begin{proof}
The proof follows from \Tref{thm-main}, \Lref{lem2} and \Tref{thm1}. The continuity of $g$ (and hence, $h$) follows from the fact that if the above condition holds, then $\cT$ would be a contraction mapping from $\cF\cap C_0(\R^{n-1},[1,n])$ to itself and $\cF\cap C_0(\R^{n-1},[1,n])$ is a closed subset of $C_0(\R^{n-1},[1,n])$. Therefore, the iterations converge to a function in $\cF$ which is (Lipschitz) continuous. 
\end{proof}

\begin{figure*}
\begin{center}
\includegraphics[width = .42\linewidth]{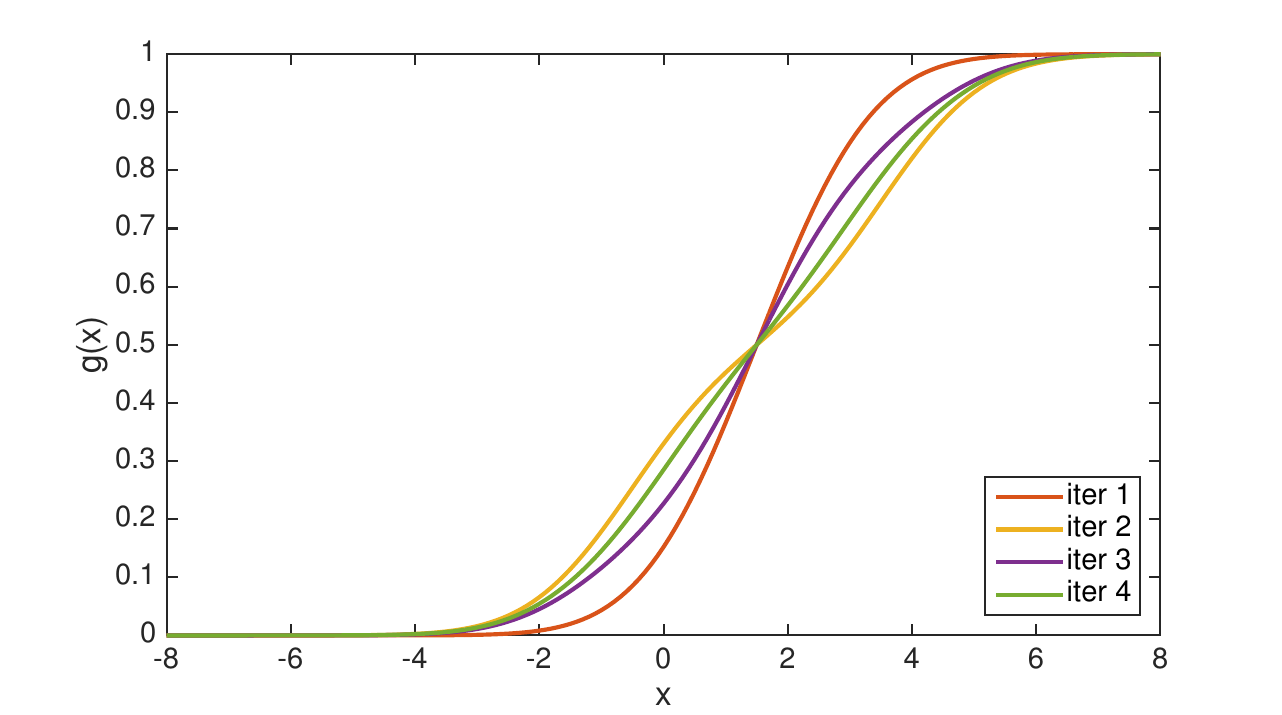}
\hspace{0.5in}
\includegraphics[width = .42\linewidth]{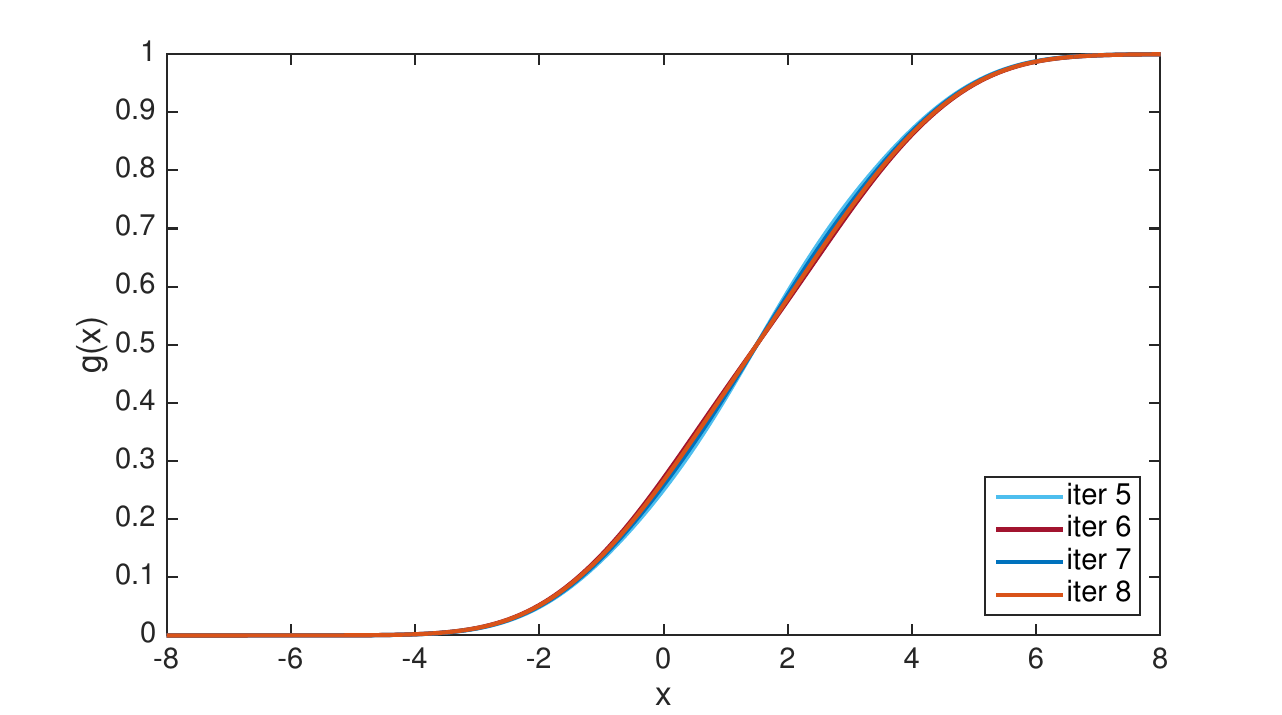}
\caption{The iterative solution $g(x)$ for $\sigma^2 = 1$ and $\tau^2 =9$.}
\end{center}
\label{fig:simul-1}
\end{figure*}

Regarding the conditions on the noise variance parameters it should be noted that for a fixed ratio of $r=\frac{\tau}{\sigma}$, the parameters $a_n, b_n, c_n, d_n$, introduced in \Lref{lemma:properties} and \Lref{itm:dist}, $e_n$, defined in \eq{def-e}, and $w_n$, defined in \eq{def-w}, are all also fixed. Therefore, one can fix $r$ while scaling $\tau$ and $\sigma$ with a large enough constant $c$ in order to have $w_n < \tau$. Roughly speaking, for a sufficiently noisy environment the condition of \Tref{thm2} holds. 

Let $\Theta$ denote the set of all $(\sigma^2,\tau^2) \in \R^{+2}$ for which the threshold policy equilibrium exists. By \Tref{thm2}, if $w_n < \tau$, a condition fully characterized by $\sigma$, $\tau$ and $n$, then $(\sigma^2,\tau^2) \in \Theta$. Therefore, $\Theta$ is an unbounded set, indeed in any direction, as discussed above. This completes the proof of \Tref{thrm:thm3}.
\noindent
{\bf Remark.} Let
$$
v_n = \frac{(n-1)(e_n(na_n+(n-2)b_n)+ b_n)}{b_n^2 \tau}.
$$
Then by \Lref{lem3} the Lipschitz constant of $\cT$ is upper bounded by $v_n$. Note that the condition $w_n < \tau$, the condition of \Tref{thm1} and \Tref{thm2}, also imply that $v_n < 1$. Therefore, the $\mathcal{L}_{\infty}$ distance between the function $h_m$ after $m$ iterations of the contraction mapping $\cT$ and the symmetric threshold function $h$, which is the fixed point of $\cT$, is $O(v_n^m)$. Given fixed noise parameters $\sigma$ and $\tau$, the parameter $v_n$ is an increasing function of $n$. This implies that the speed of convergence is decreasing with the increase in the number of agents. In other words, as $n$ increases, we will need more iterations to get the same level of precision for the approximation of the symmetric threshold function $h$.

\section{Numerical Analysis}
\label{numerical}

In this section, we consider the simple case of a global game with two agents, i.e., $n = 2$ for a numerical analysis. In this case, the independent noisy observations are $x_1$ and $x_2$. Furthermore, agent $1$ observes $y_1 = x_2 + \zeta_{21}$ and agent $2$ observes $y_2 = x_2 + \zeta_{12}$. In \Tref{thrm:thm2}, we proved that there is a coordinated equilibrium strategy of the form 
$$
s = (1_{h(x_1,y_1) \geq 0},1_{h(x_2,y_2)\geq 0}),
$$ 
for some continuous and strictly decreasing function $h:\R^2 \rightarrow \R$, under the condition provided in \Tref{thm2} which can be reformulated as 
\be{main-condition}
\frac{4a^2+3a-1}{a(1-a)} \leq \tau,
\ee
where $a = a_2 = \frac{\sigma^2 +\tau^2}{2\sigma^2+\tau^2}$. Furthermore, the convergence method presented in Section\,\ref{convergence} can be used to find $\cI h:\R\rightarrow \R$ such that 
$$
h(x,y) = \cI h(x) - y
$$ 
is a fixed point threshold function. 

While \eq{main-condition} is a sufficient condition to establish the convergence, it is not a necessary condition. In fact, the set $\Theta$ of all $(\sigma,\tau) \in \R^{+2}$ for which the threshold policy equilibrium exists may contain points $(\sigma,\tau)$ that do not satisfy \eq{main-condition}. This is shown for a simple numerical example. Let $\sigma = 1$ and $\tau = 3$. It can be observed that this certain choice of $(\sigma,\tau)$ does not satisfy \eq{main-condition}. For simplicity, the fixed point equation is solved for the function $g(x) = ax+b\cI h(x) - 1$, which is slightly modified comparing to \eq{def-g}. In the simulations, we demonstrate the convergence of solution $g(x)$ of \eq{eqn:iter} after only eight iterations, shown in Figure~2.

Next we evaluate the effect of number of agents on the convergence of the iterative operation to a fixed point solution $g(x)$. As briefly discussed in Section\,\ref{convergence}, increasing the number of agents slows down the convergence. Furthermore, given that the noise variance parameters are fixed, it is expected that convergence happens up to a certain $n_0$ and for $n > n_0$, the iteration does not converge. We have observed this through numerical simulations. In these simulations, we have limited our attention to functions $g(.) : \R^2 \rightarrow \R$ that are functions of agent's private observation as well as the aggregate sum of all the other observations. While our theoretical analysis in Section\,\ref{proof} does not guarantee that the fixed point solution is of this certain form, we have observed the convergence given this additional constraint. It is also possible that there exists a more strict set of conditions to guarantee the existence of a fixed point solution given this additional constraint. Let $\sigma = 10$ and $\tau=30$. In Figure\,3 the convergence of the iterative process for different values of the number of agents $n$ is shown. The initial function is $g^{(0)}=0$ on the entire $\R^2$. Then $g^{(t+1)} = \cT (g^{(t)})$, for $t\geq 0$, where $\cT$ is defined Definition\,\ref{def-T}. The distance between $g^{(t+1)}$ and $g^{(t)}$ is measured in terms of $\cL_2$ norm. It can be observed in Figure\,3 that as the number of agents $n$ increases, the convergence slows down. 

\begin{figure}
\centering
\label{plot3}
\includegraphics[width = .98\linewidth]{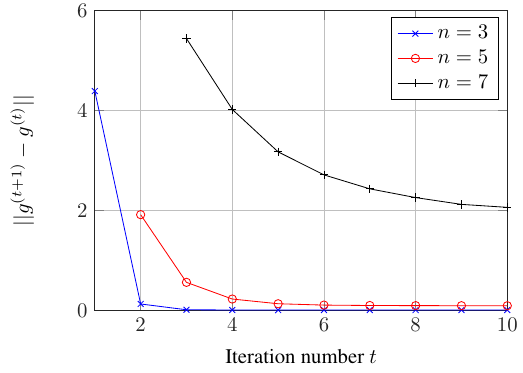}
\caption{The iterative convergence for different values of number of agents $n$}
\end{figure}

The effect of noise variance parameters on the convergence of the iterative operations is also evaluated. The number of agents is fixed to $n=5$. Let  
\be{ratio_sigma}
r = \frac{\sigma}{\tau} = \frac{1}{3}.
\ee
As mentioned in Section\,\ref{convergence}, the set of noise variance parameters $(\sigma,\tau)$ for which the iteration does not converge is bounded. Let $n=3$ and then the convergence for different values of $\sigma$ are evaluated, assuming the ratio $r$ is fixed as in \eq{ratio_sigma}. It is observed that for $\sigma = 1$, the convergence does not happen, however, it happens for $\sigma = 3,5,10$. Furthermore, the convergence for $\sigma=10$ is faster than the cases $\sigma = 3,5$, as expected. The results are shown in Figure\,4.

\begin{figure}
\centering
\label{plot4}
\includegraphics[width = .98\linewidth]{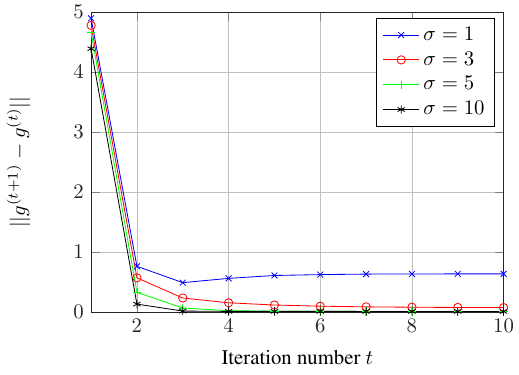}
\caption{The iterative convergence for different values of noise variances and $n=3$}
\end{figure}

\section{Conclusion}\label{sec:conclusion}
In this paper, we studied global games with noisy sharing of information and the existence and the structure of equilibria for those games. We showed that under some conditions on the noise parameters of the system, there exists a unique threshold-type equilibrium policy. To show this result, we formulated the existence of such an equilibrium as an existence of a solution to a functional fixed point equation. Using this formulation, we showed that for a set of noise parameters, the fixed point equation results in a contraction mapping on certain complete spaces and hence, we obtain a unique equilibrium based on threshold policies. Furthermore, such a contraction mapping would also provide us with a natural approximation mechanism. As discussed in Section\,\ref{numerical}, the conditions derived for existence of the threshold-type equilibrium policy may not be necessary. Therefore, characterizing the entire set of parameters for which the threshold-type equilibrium policy exists is still an open problem. 

\appendix
\label{appendix}

The next lemma determines the posterior of the unknown parameter given a bunch of independent Gaussian observations.
 \begin{lemma}
 \label{lem:parallel}
 Let $\theta$ be drawn from the non-informative uniform prior on $\R$. Further, let $\{\phi_i\}_{i \in [n]}$ denote noisy observations of $\theta$ with additive zero-mean independent Gaussian noises with variances $\{ \sigma^2_i \}_{i \in [n]}$. Let 
\begin{equation}
 \frac{1}{\sigma^2_\theta} := \sum_{i \in [n]} \frac{1}{\sigma^2_i}.
\label{eq:sigma-theta}
\end{equation}
Then, conditioned on $\{\phi_i\}_{i \in [n]}$, we can write $\theta$ as
 $$
 \theta = \sigma^2_\theta  \sum_{i \in [n]} \frac{\phi_i}{\sigma^2_i}+ z,
 $$
 where $z$ is $\cN(0,  \sigma^2_\theta)$ which is independent of $\{\phi_i\}_{i \in [n]}$. 
\end{lemma}
\begin{proof}
Notice that
\begin{align}
f(\theta \mid \{\phi_i\}_{i \in [n]} ) &= \frac{f(\{\phi_i\}_{i \in [n]} \mid \theta) }{\int_{-\infty}^\infty f(\{\phi_i\}_{i \in [n]} \mid \alpha) d \alpha}\nonumber\\
& = \frac{\exp\left( -\sum_{i \in [n]}  \frac{(\phi_i - \theta)^2}{2 \sigma^2_i} \right)}{  \int_{-\infty}^\infty   \exp\left( - \sum_{i \in [n]}  \frac{(\phi_i - \alpha)^2}{2 \sigma^2_i} \right) d\alpha}.
\label{eq:mean-var}
\end{align}
By inspecting the PDF in~\eqref{eq:mean-var}, it is clear that, conditioned on $\{\phi_i\}_{i \in [n]}$, $\theta$ follows a Gaussian distribution. It is therefore sufficient to calculate its mean and its variance.
Further notice that
\begin{align*}
\sum_{i \in [n]}  \frac{1}{\sigma^2_i} (\theta - \phi_i )^2 &=  \frac{1}{\sigma^2_\theta} \theta^2 - 2 \left(\sum_{i \in [n]}  \frac{\phi_i}{\sigma^2_i}\right) \theta + C_1\\
& = \frac{1}{\sigma^2_\theta} \left( \theta - \sigma^2_\theta  \sum_{i \in [n]}  \frac{\phi_i}{\sigma^2_i} \right)^2 + C_2,
\end{align*}
where $\sigma^2_\theta$ is defined in~\eqref{eq:sigma-theta}, and $C_1$ and $C_2$ are constants with respect to $\theta$ and only depends on $(\phi_1, \ldots, \phi_n).$
 Hence,
 $$
 f(\theta \mid \{\phi_i\}_{i \in [n]} ) \propto  \exp\left( -\frac{1}{2\sigma^2_\theta} \left(\theta - \sigma^2_\theta  \sum_{i \in [n]}  \frac{\phi_i}{\sigma^2_i} \right)^2  \right) ,
 $$
 which leads to the desired result.
\end{proof} 

\textit{Proof of \Lref{lemma:properties}.} Note that $(x_i, \{y_{ji} \}_{j \in [n] \setminus \{i\}})$ are zero-mean independent Gaussian observations of $\theta$. Further, $E[x_i^2] = \sigma^2$, whereas $E[y_{ji}^2] = \sigma^2 + \tau^2$.
	Therefore, the result follows by invoking Lemma~\ref{lem:parallel} and noticing the definition of $z_i$ in~\eqref{eqn:z}.

\textit{Proof of \Lref{itm:dist}.} 
	Since $\theta$ follows a non-informative uniform prior on $\R$, $x_k$ (or $y_{jk}$) also follows a uniform prior on $\R$. Therefore, $(x_i, \{y_{ji} \}_{j \in [n] \setminus \{i\}})$ can be viewed as  zero-mean independent Gaussian observations of $x_k$ given the dependence structure in Figure~\ref{fig:multi-agent}. Similarly, $(x_i, \{y_{ji} \}_{j \in [n] \setminus \{i\}})$ can be viewed as independent zero mean observations of 
	$y_{jk}$ for $j \neq k$.
	Hence, we can invoke $n$ instances of Lemma~\ref{lem:parallel}, where the state of the world is given by $x_k$, and $y_{jk}$ for $j \neq k$, respectively.

First, let us consider $y_{ik}$. Notice that  given $x_i$,  we see that $y_{ik}$ is independent of all other observations $\{y_{li} \}_{l \in [n] \setminus \{i\}} $, and it is a Gaussian observation of $x_i$ with noise variance $\tau^2$. 

Next, consider $x_k$. From the dependence structure in Figure~\ref{fig:multi-agent}, we see that $x_k$ depends on $(x_i, \{y_{ji} \}_{j \in [n] \setminus \{i,k\}})$ only through $\theta$. 
Hence, by invoking Lemma~\ref{lemma:properties} with $(n-1)$, we obtain the distribution of $\theta$ given $(x_i, \{y_{ji} \}_{j \in [n] \setminus \{i,k\}})$.
Furthermore, due to the non-informative prior, $\theta$ could be thought of as an independent Gaussian observation of $x_k$ with noise variance~$\sigma^2$. Thus, we invoke Lemma~\ref{lem:parallel} to obtain the distribution of~$x_k$.

The procedure for $y_{lk}$ with $l \neq i$ is similar and omitted for brevity.

	Finally, it is clear that, conditioned on $x_i$, $y_{ik}$ is independent of all other $\{x_j\}_{j \in [n]\setminus \{i\}}$ and $\{y_{jk} \}_{j \in [n] \setminus \{i\}}$. Hence, $ \ep_{ik}$ is independent of $(\ep_k, \{\ep_{lk}\}_{l \in [n] \setminus \{i,k\}  } )$ as desired.

\bibliographystyle{IEEEtran}
\bibliography{globalgames}

\end{document}